\newtheorem {theorem} {Theorem}
\newtheorem {proposition} [theorem]{Proposition}
\def\be{\begin{equation}}
\def\ee{\end{equation}}
\def\bq{\begin{eqnarray}}
\def\eq{\end{eqnarray}}
\def\beq{\begin{eqnarray*}}
\def\eeq{\end{eqnarray*}}
\def \e {\varepsilon}
\begin{document}

\title{SLOW INVARIANT MANIFOLDS OF\\ SLOW-FAST DYNAMICAL SYSTEMS}

\author{Jean-Marc GINOUX}

\affiliation{Aix Marseille Univ, Universit\'{e} de Toulon, CNRS, CPT, Marseille, France, ginoux@univ-tln.fr}

\begin{abstract}
Slow-fast dynamical systems, i.e., singularly or non-singularly perturbed dynamical systems possess slow invariant manifolds on which trajectories evolve slowly. Since the last century various methods have been developed for approximating their equations. This paper aims, on the one hand, to propose a classification of the most important of them into two great categories: singular perturbation-based methods and curvature-based methods, and on the other hand, to prove the equivalence between any methods belonging to the same category and between the two categories. Then, a deep analysis and comparison between each of these methods enable to state the efficiency of the \textit{Flow Curvature Method} which is exemplified with paradigmatic Van der Pol singularly perturbed dynamical system and Lorenz slow-fast dynamical system.
\end{abstract}

\maketitle

\section{Introduction}
\noindent

A great number of phenomena can generally be modeled with dynamical systems, i.e., with sets of nonlinear ordinary differential equations. Among their huge variety, the evolution of some of them are characterized by the existence of at least two time scales: a \textit{slow} time and a \textit{fast} time. The metaphorical example of the Tantalus cup is often used to emphasize this kind of evolution since it fills slowly and quickly empties \cite{Lecorbeiller}. Such \textit{slow-fast} evolution is transcribed by the presence of at least a small multiplicative parameter $\varepsilon$ in the velocity vector field of these dynamical systems which have been called since \textit{singularly perturbed dynamical systems}.

The classical geometric theory of differential equations was originally developed by Poincar\'{e} in his famous memoirs entitled ``On the curves defined by differential equations'' \cite{P1,P2,P3,P4} in which he formalized the concept of \textit{dynamical systems} consisting of set of ordinary differential equations (ODE) as well as that of \textit{phase space} (For a History of Nonlinear Oscillations Theory, see Ginoux \cite{Ginoux2017}). Few years later, \textit{perturbation method}, which had been introduced by Sim\'{e}on Denis Poisson in 1830, were greatly improved by Poincar\'{e} in his \textit{New Methods of Celestial Mechanics} \cite{P5}. According to Minorsky \cite[p. 213]{Minorsky1962}:\\

\begin{quote}
``Poincar\'{e} opens the new period in the perturbation theory in that the periodicity begins to play the primary role, and this question became of fundamental importance in the theory of oscillations.''
\end{quote}

Indeed, this idea was emphasized in the very beginning of the chapter III of the first volume of his ``New Methods'' by Poincar\'{e} \cite[p. 82]{P5} through this famous sentence:\\

\begin{quote}
``In addition, these periodic solutions are so valuable for us because they are, so to say, the only breach by which we may attempt to enter an area heretofore deemed inaccessible.''
\end{quote}

In the 1930-1950s Andronov \& Chaikin \cite{Andro1}, Levinson \cite{Lev} and Tikhonov \cite{Tikh} generalized Poincar\'{e}'s ideas and stated that \textit{singularly perturbed systems} possess \textit{invariant manifolds} on which trajectories evolve slowly, and toward which nearby orbits contract exponentially in time (either forward or backward) in the normal directions. These manifolds have been called asymptotically stable (or unstable)
\textit{slow invariant manifolds}. Then, Fenichel theory \cite{Fen5, Fen6, Fen7, Fen8} for the \textit{persistence of normally hyperbolic invariant manifolds} enabled to establish the \textit{local invariance} of \textit{slow invariant manifolds} that possess both expanding and contracting directions and which were labeled \textit{slow invariant manifolds}. The theory of invariant manifolds for an ordinary  differential equation is based on the work of Hirsch, \textit{et al.} \cite{Hirsch}.

During the last century, various methods have been developed to compute the \textit{slow invariant manifold} or, at least an asymptotic expansion in power of $\varepsilon$.

The seminal works of Wasow \cite{Wasow}, Cole \cite{Cole}, O'Malley \cite{Malley1, Malley2} and Fenichel \cite{Fen5, Fen6, Fen7, Fen8} to name but a few, gave rise to the so-called \textit{Geometric Singular Perturbation Theory}. According to this theory, existence as well as local invariance of the \textit{slow invariant manifold} of \textit{singularly perturbed dynamical systems} has been stated. Then, the determination of the \textit{slow invariant manifold} equation turned into a regular perturbation problem in which one generally expected the asymptotic validity of such expansion to breakdown \cite{Malley2}. In the framework of the \textit{Geometric Singular Perturbation Theory}, the zero-order approximation in $\varepsilon$ of the \textit{slow invariant manifold} is called \textit{singular approximation} \cite{Rossetto1987} or \textit{critical manifold} \cite{Guckenheimer2000}. At the end of the 1980s, Rossetto \cite{Rossetto1986, Rossetto1987} developed the \textit{Successive Approximations Method} to approximate the \textit{slow invariant manifold} of \textit{singularly perturbed dynamical systems}. Then, Gear \textit{et al.} \cite{Gear} and Zagaris \textit{et al.} \cite{Zagaris} used the \textit{Zero-Derivative Principle} for the same purpose.

Beside, \textit{singularly perturbed dynamical systems} there are \textit{dynamical systems} without an explicit timescale splitting. In this case the \textit{Geometric Singular Perturbation Theory} can not be applied anymore. Some of these systems, which have been called \textit{slow-fast dynamical systems}, have the following property: their Jacobian matrix has at least a \textit{fast eigenvalue}, i.e. with the largest absolute value of the real part. Let's notice that a \textit{singularly perturbed dynamical system} is \textit{slow-fast} while a \textit{slow-fast dynamical system} is not necessary \textit{singularly perturbed} (For a proof of this statement see Rossetto \textit{et al.} \cite{Rossetto1998}). Thus, in the beginning of the 1990s various approaches have been proposed in order to approximate slow manifolds of such \textit{slow-fast dynamical systems}. As recalled by Lebiedz and Unger \cite{Lebiedz} ``In 1992, Maas and Pope introduced the intrinsic low-dimensional manifold (ILDM) method, which has become very popular and widely used in the reactive flow community, in particular in combustion applications.'' Two years later, Br{\o}ns and Bar-Eli \cite{Brons1994} presented an approach, only applicable to two-dimensional systems, based on the \textit{inflection line}, i.e. the locus of points where the curvature of trajectories vanishes, to provide an approximation of the slow manifold. Then, Rossetto \textit{et al.} \cite{Rossetto1998} proposed to use the \textit{Tangent Linear System Approximation} for the same purpose. Since 2005, a new approach of $n$-dimensional \textit{singularly perturbed dynamical systems} or \textit{slow-fast dynamical systems} based on the location of the points where the \textit{curvature} of \textit{trajectory curves} vanishes, called \textit{Flow Curvature Method} has been developed by Ginoux \textit{et al.} \cite{GiRo1,GiRo2} and Ginoux \cite{Gin}. So, the aim of this paper is to propose a classification of the most important method of approximation of slow manifolds into two great categories:

\smallskip

\begin{itemize}
\item \textit{singular perturbation-based methods} and
\item \textit{curvature-based methods}.
\end{itemize}

\smallskip

Thus, it is stated that \textit{Geometric Singular Perturbation Method} encompasses \textit{Successive Approximations Method} and \textit{Zero-Derivative Principle} which are perfectly identical and belong to the first category. Then, we show on the one hand that both \textit{Intrinsic Low-Dimensional Manifold Method} (IDLM) and \textit{Tangent Linear System Approximation} (TLSA) are also identical. On the other hand we prove that the \textit{Flow Curvature Method} is a generalization of the \textit{Inflection Line Method} (ILM) which cannot in anyway be extended to higher dimensions than two. The reasons of such impossibility which mainly consists in the arc length parametrization are fully detailed. Finally, we recall the identity between \textit{Flow Curvature Method} and \textit{Tangent Linear System Approximation} established in our previous works more than ten years ago. It follows that \textit{Flow Curvature Method} encompasses these three other methods (IDLM, TLSA, ILM) which belong to the second category. After having divided these methods of approximation of slow manifolds into two categories, we also recall the identity between \textit{Flow Curvature Method} and \textit{Geometric Singular Perturbation Method} for $n$-dimensional \textit{singularly perturbed dynamical systems} up to suitable order in $\varepsilon$. At last, a deep analysis and comparison between each of these various methods enable to state the efficiency of the \textit{Flow Curvature Method} what is exemplified with paradigmatic \textit{singularly perturbed dynamical systems} and \textit{slow-fast dynamical systems} of dimension two and three.

\section{Singular Perturbation-Based Methods}

\subsection{Singularly perturbed dynamical systems}

Following the works of Jones \cite{Jones} and Kaper \cite{Kaper} some fundamental concepts and definitions for systems of ordinary differential equations with two time scales, i.e., for \textit{singularly perturbed dynamical systems} are briefly recalled. Thus, we consider systems of differential equations of the form:

\begin{equation}
\label{eq1}
\left\{
\begin{aligned}
 {\vec {x}}' & = \vec {f}\left( {\vec {x},\vec {y},\varepsilon } \right), \hfill \\
 {\vec {y}}' & = \varepsilon \vec {g}\left( {\vec {x},\vec {y},\varepsilon } \right), \hfill \\
\end{aligned}
\right.
\end{equation}

where $\vec {x} \in \mathbb{R}^m$, $\vec {y} \in \mathbb{R}^p$, $\varepsilon \in \mathbb{R}^ + $, and the prime denotes differentiation with respect to the independent variable $t$. The functions $\vec {f}$ and $\vec {g}$ are assumed to be $C^\infty$ functions (In certain applications these functions will be
supposed to be $C^r$, $r \geqslant 1$) of $\vec {x}$, $\vec {y}$ and $\varepsilon$ in $U\times I$, where $U$ is an open subset of $\mathbb{R}^m\times \mathbb{R}^p$ and $I$ is an open interval containing $\varepsilon = 0$.

In the case when $\varepsilon \ll 1$, i.e., is a small positive number, the variable $\vec {x}$ is called \textit{fast} variable, and $\vec {y}$ is called \textit{slow} variable. Using Landau's notation: $O ( \varepsilon ^l )$ represents a real polynomial in $\varepsilon $ of $l$ degree, with $l \in \mathbb{Z}$,
it is used to consider that generally $\vec {x}$ evolves at an $O\left( 1 \right)$ rate; while $\vec {y}$ evolves at an $O\left( \varepsilon \right)$ \textit{slow} rate.

Reformulating the system (\ref{eq1}) in terms of the rescaled variable $\tau = \varepsilon t$, we obtain:

\begin{equation}
\label{eq2}
\left\{
\begin{aligned}
 \varepsilon \dot {\vec {x}} & = \vec {f}\left( {\vec {x},\vec{y},\varepsilon } \right), \hfill \\
 \dot {\vec {y}} & = \vec {g}\left( {\vec {x},\vec {y},\varepsilon } \right). \hfill \\
\end{aligned}
\right.
\end{equation}

The dot $\left( \cdot \right)$ represents the derivative with respect to the new independent variable $\tau $. The independent variables $t$ and $\tau $ are referred to the \textit{fast} and \textit{slow} times, respectively, and (\ref{eq1}) and (\ref{eq2}) are called the \textit{fast} and \textit{slow} systems, respectively. These systems are equivalent whenever $\varepsilon \ne 0$, and they are labeled \textit{singular perturbation problems} when $\varepsilon \ll 1$, i.e., is a small positive parameter. The label ``singular'' stems in part from the discontinuous limiting behavior in the system (\ref{eq1}) as $\varepsilon \to 0$. In such case, the system (\ref{eq1}) reduces to an $m$-dimensional system called \textit{reduced fast system}, with the variable $\vec{y}$ as a constant parameter:

\begin{equation}
\label{eq3}
\left\{
\begin{aligned}
{\vec {x}}' & = \vec {f}\left( {\vec {x}, \vec {y}, 0 } \right), \hfill \\
{\vec {y}}' & = \vec {0}. \hfill \\
\end{aligned}
\right.
\end{equation}

System (\ref{eq2}) leads to the following differential-algebraic system called \textit{reduced slow system} which dimension decreases from $m + p$ to $p$:

\begin{equation}
\label{eq4}
\left\{
\begin{aligned}
\vec {0} & = \vec {f}\left( \vec {x},\vec {y},0  \right), \hfill\\
\dot {\vec {y}} & = \vec {g} \left( \vec {x},\vec {y}, 0  \right).\hfill \\
\end{aligned}
\right.
\end{equation}

By exploiting the decomposition into \textit{fast} and \textit{slow} reduced systems (\ref{eq3}) and (\ref{eq4}), the geometric approach reduced the full \textit{singularly perturbed system} to separate lower-dimensional regular perturbation problems in the \textit{fast} and \textit{slow} regimes, respectively.

\subsection{Fenichel geometric theory}

Fenichel geometric theory for general systems (\ref{eq1}), i.e., a theorem providing
conditions under which \textit{normally hyperbolic invariant manifolds} in system (\ref{eq1}) persist when the perturbation is
turned on, i.e., when $0 < \varepsilon \ll 1$ is briefly recalled in this subsection. This theorem concerns only compact manifolds with boundary.

\subsubsection{Normally hyperbolic manifolds}

Let's make the following assumptions about system (\ref{eq1}):\\

\textbf{(H}$_{1}$\textbf{)} \textit{The functions }$\vec
{f}$ \textit{and} $\vec {g}$ \textit{are} $C^\infty $\textit{ in
}$U\times I$\textit{, where U is an open subset of
}$\mathbb{R}^m\times \mathbb{R}^p$\textit{ and I is an open interval
containing }$\varepsilon = 0.$

\textbf{(H}$_{2}$\textbf{)} \textit{There exists a set }$M_0
$\textit{ that is contained in }$\left\{ {\left( {\vec {x},\vec {y}}
\right):\vec {f}\left( {\vec {x},\vec {y},0} \right) = {\vec {0}}}
\right\}$\textit{ such that }$M_0 $\textit{ is a compact manifold
with boundary and }$M_0 $\textit{ is given by the graph of a
}$C^\infty $\textit{ function }$\vec {y} = \vec {Y}_0 \left( \vec
{x} \right)$\textit{ for }$\vec {x} \in D$\textit{, where }$D
\subseteq \mathbb{R}^p$\textit{ is a compact, simply connected
domain and the boundary of D is an} $\left( {p - 1} \right)$
\textit{dimensional} $C^\infty$ \textit{ submanifold. Finally, the
set D is overflowing invariant with respect to (\ref{eq2}) when
}$\varepsilon = 0. $

\textbf{(H}$_{3}$\textbf{)} $M_0 $\textit{ is normally hyperbolic
relative to (\ref{eq3}) and in particular it is required for all
points }$\vec {p} \in M_0 $\textit{, that there are k (resp. l)
eigenvalues of }$D_{\vec{y}} \vec{f}\left( {\vec{p},0}
\right)$\textit{ with positive (resp. negative) real parts bounded
away from zero, where }$k + l = m.$

\subsubsection{Fenichel persistence theory for singularly perturbed systems}

For compact manifolds with boundary, Fenichel's persistence theory states that, provided the hypotheses $(H_{1})-(H_{3})$ are satisfied, the system (\ref{eq1}) has a \textit{slow} (or center) \textit{manifold}, and this \textit{slow manifold} has \textit{fast} stable and unstable \textit{manifolds}.\\

\newpage

\textbf{\textit{Theorem for compact manifolds with boundary:}}

\textit{Let system (\ref{eq1}) satisfy the conditions (H}$_{1})-(H_{3}$\textit{). If }$\varepsilon > 0$\textit{ is sufficiently small, then there exists a function }$\vec {Y}\left( {\vec {x},\varepsilon } \right)$\textit{ defined on D such that the manifold }$M_\varepsilon = \{ {\left( {\vec {x},\vec {y}} \right):\vec {y} = \vec {Y}\left( {\vec {x},\varepsilon } \right)} \}$ \textit{is locally invariant under (\ref{eq1}). Moreover, }$\vec {Y}\left( {\vec {x},\varepsilon } \right)$\textit{ is }$C^r$\textit{ for any }$r < + \infty $, \textit{ and }$M_\varepsilon $\textit{ is }$C^rO\left( \varepsilon \right)$
\textit{close to }$M_0 $\textit{. In addition, there exist perturbed local stable and unstable manifolds of }$M_\varepsilon $\textit{. They are unions of invariant families of stable and unstable fibers of dimensions l and k, respectively, and they are} $C^rO\left( \varepsilon \right)$ \textit{close for all} $r < + \infty$ \textit{, to their counterparts.}\\

\begin{proof}
For proof of this theorem see Fenichel \cite{Fen5,Fen6,Fen7,Fen8}.
\end{proof}

The label slow manifold is attached to $M_\varepsilon $ because the magnitude of the vector field restricted to $M_\varepsilon $ is $O\left( \varepsilon \right)$, in terms of the fast independent variable $t$. So persistent manifolds are labeled \textit{slow manifolds}, and the proof of their persistence is carried out by demonstrating that the local stable and unstable manifolds of $M_0$ also persist as locally invariant manifolds in the perturbed system, i.e., that the local hyperbolic structure persists, and then the slow manifold is immediately at hand as a locally invariant manifold in the transverse intersection of these persistent local stable and unstable manifolds.

\subsection{Geometric Singular Perturbation Method}

Earliest geometric approaches to \textit{singularly perturbed systems} have been developed by Cole \cite{Cole}, O'Malley \cite{Malley1, Malley2}, Fenichel \cite{Fen5,Fen6,Fen7,Fen8} for the determination of the \textit{slow manifold} equation. Generally, Fenichel theory enables to turn the problem for explicitly finding functions $\vec {y} = \vec {Y}\left( {\vec {x},\varepsilon } \right)$ whose graphs are locally invariant \textit{slow manifolds} $M_\varepsilon $ of system (\ref{eq1}) into regular perturbation problem \cite[p. 112]{Malley1}. Invariance of the manifold $M_\varepsilon $ implies that $\vec {Y}\left( {\vec {x},\varepsilon } \right)$ satisfies:

\begin{equation}
\label{eq5}
D_{\vec {x}} \vec {Y}\left( {\vec
{x},\varepsilon } \right)\vec {f}(\vec{x}, {\vec {Y}\left( {\vec
{x},\varepsilon } \right),\varepsilon } ) = \varepsilon  \vec
{g}(\vec{x}, {\vec {Y}\left( {\vec {x},\varepsilon } \right),\varepsilon } ).
\end{equation}

According to Guckenheimer \textit{et al.} \cite[p.131]{Guck1983}, this (partial)
differential equation for $\vec {Y}\left( {\vec {x}, \varepsilon} \right)$ cannot be
solved exactly. So, its solution can be approximated arbitrarily closely as
a Taylor series at $\left( {\vec {x}, \varepsilon} \right) = ( \vec {0}, 0 )$. Then, the following perturbation expansion is plugged:

\begin{equation}
\label{eq6}
\vec{Y}\left( {\vec {x},\varepsilon } \right) = \vec {Y}_0 \left( \vec
{x} \right) + \varepsilon \vec {Y}_1 \left( \vec {x} \right) + O\left( {\varepsilon^2} \right)
\end{equation}

into (\ref{eq5}) to solve order by order for $\vec {Y}\left( {\vec {x},\varepsilon } \right)$. The Taylor series expansion \cite{Malley2} for $\vec {f}(\vec{x}, {\vec {Y}\left( {\vec {x},\varepsilon } \right),\varepsilon } )$ and $\vec {g}(\vec{x}, {\vec {Y}\left({\vec {x},\varepsilon } \right),\varepsilon })$ up to terms of order two in $\varepsilon $ reads:

\[
\begin{aligned}
& \vec {f}(\vec{x}, {\vec {Y}\left( {\vec {x},\varepsilon } \right),\varepsilon })  =  \vec {f}(\vec{x}, {\vec {Y}_0 \left( \vec {x} \right),0}) +  \varepsilon \left[ D_{\vec {y}} \vec {f}(\vec{x}, {\vec {Y}_0 \left( \vec {x} \right),0} ) \vec {Y}_1\left( \vec {x} \right) + \frac{\partial \vec {f}}{\partial \varepsilon } (\vec{x}, {\vec {Y}_0 \left( \vec {x} \right),0} ) \right], \\
& \vec {g}(\vec{x}, {\vec {Y}\left( {\vec {x},\varepsilon } \right),\varepsilon } )  =  \vec {g} (\vec{x}, {\vec {Y}_0 \left( \vec {x} \right),0} ) +  \varepsilon \left[ D_{\vec {y}} \vec {g}(\vec{x}, {\vec {Y}_0 \left( \vec {x} \right),0} ) \vec {Y}_1\left( \vec {x} \right) + \frac{\partial \vec {g}}{\partial \varepsilon }(\vec{x}, {\vec {Y}_0 \left( \vec {x} \right),0} ) \right].
\end{aligned}
\]

\textbullet \hspace{0.1in} At order $\varepsilon ^0$, Eq. (\ref{eq5}) gives:

\begin{equation}
\label{eq7}
D_{\vec {x}} \vec{Y_0}\left( \vec{x}\right)  \vec {f}(\vec{x}, {\vec {Y}_0 \left( \vec {x} \right),0}) = \vec {0},
\end{equation}

which defines $\vec {Y}_0 \left( \vec {x} \right)$ due to the invertibility of $D_{\vec {y}} \vec {f}$ and the \textit{Implicit Function Theorem}.\\

\textbullet \hspace{0.1in} The next order $\varepsilon ^1$ provides:

\begin{equation}
\label{eq8}
D_{\vec {x}} \vec {Y}_0 \left( \vec {x} \right) \left[ D_{\vec {y}} \vec {f} (\vec{x}, {\vec {Y}_0 \left( \vec {x} \right),0} ) \vec{Y_1}\left( \vec{x} \right) + \frac{\partial \vec {f}}{\partial \varepsilon } \right] = \vec {g}(\vec{x}, {\vec {Y}_0 \left( \vec {x} \right),0} ),
\end{equation}

which yields $\vec {Y}_1 \left( \vec {x} \right)$ and so forth.

\newpage

So, regular perturbation theory makes it possible to build an approximation of locally invariant \textit{slow manifolds} $M_\varepsilon$. Thus, in the framework of the \textit{Geometric Singular Perturbation Method}, three conditions are needed to characterize the \textit{slow manifold} associated with \textit{singularly perturbed system}: existence, local invariance and determination. Existence and local invariance of the \textit{slow manifold} are stated according to Fenichel theorem for compact manifolds with boundary while \textit{asymptotic expansions} provide its equation up to the order of the expansion.

\subsubsection{Van der Pol system}

The Van der Pol system \cite{Vdp1926} was introduced in the middle of the twenties for modelling \textit{relaxation oscillations}. This model presented below is nowadays considered as a paradigm of \textit{singularly perturbed dynamical systems}.

\begin{equation}
\label{eq9}
\left\{
\begin{aligned}
x' & = f\left( x, y, \e \right) = x + y - \dfrac{x^3}{3}, \hfill \\
y' & = g\left( x, y, \e \right) = - \varepsilon x. \hfill \\
\end{aligned}
\right.
\end{equation}

According to Eq. (\ref{eq5}) invariance of the manifold $M_{\varepsilon}$ reads:

\begin{equation}
\label{eq10}
\left( \dfrac{\partial Y}{\partial x}(x,\e) \right) f\left( x, Y(x,\e),\e \right) = \varepsilon g\left(x, Y(x,\e),\e \right).
\end{equation}

By application of the \textit{Geometric Singular Perturbation Method} and by solving equation (\ref{eq10}) order by order provides at:\\

\textbf{Order $\varepsilon^0$}

\begin{equation}
\label{eq11}
\dfrac{\partial Y_0}{\partial x}(x) f(x, Y_0(x),0) = 0 \quad \Leftrightarrow \quad  f(x, Y_0(x),0) = 0 \quad \Leftrightarrow \quad Y_0(x) = \dfrac{x^3}{3} - x.
\end{equation}

\textbf{Order $\varepsilon^1$}

\begin{equation}
\label{eq12}
\dfrac{\partial Y_0}{\partial x}(x) Y_1(x) =  g(x, Y_0(x),0) \quad \Leftrightarrow \quad  Y_1(x) =  \dfrac{x}{1-x^2}.
\end{equation}

Thus, according to Eq. (\ref{eq6}), and following this method an approximation up to order $\e^3$ of the \textit{slow invariant manifold} of the Van der Pol \textit{singularly perturbed dynamical system} is given by:

\begin{equation}
\label{eq13}
y = Y(x, \e) = \dfrac{x^3}{3} - x + \e \dfrac{x}{1-x^2}  + \e^2 \dfrac{x (1 + x^2)}{(1 - x^2)^4} + O\left( {\varepsilon^3} \right).
\end{equation}

\subsection{Successive Approximations Method}

At the end of the 1980s, Rossetto \cite{Rossetto1986, Rossetto1987} proposed the \textit{Successive Approximations Method} to approximate the \textit{slow invariant manifold} of \textit{singularly perturbed dynamical systems}. He considered that the \textit{singular approximation}, i.e., the zero-order approximation in $\varepsilon$ of the {\it slow invariant manifold}, i.e., the \textit{critical manifold} $M_0$ defined by $\vec {f}\left( {\vec {x},\vec {y},\varepsilon } \right) = \vec{0}$ is overflowing invariant with respect to (\ref{eq1}) when $\varepsilon = 0$. Invariance of the manifold $M_0$ implies that $\vec {f}\left( {\vec {x},\vec {y},\varepsilon } \right)$ satisfies:

\begin{equation}
\label{eq14}
\dfrac{d \vec{f}}{dt}\left( {\vec {x},\vec {y},\varepsilon } \right) = \vec{0} \quad \Leftrightarrow \quad (D_{\vec {x}} \vec{f} )\vec{x}' + (D_{\vec {y}} \vec{f})\vec{y}' = \vec{0}.
\end{equation}

According to (\ref{eq1}), this leads to:

\begin{equation}
\label{eq15}
(D_{\vec {x}} \vec{f} )\vec{f} ( {\vec {x},\vec {y},\varepsilon } ) + \varepsilon (D_{\vec {y}} \vec{f} ) \vec{g} ( {\vec {x},\vec {y},\varepsilon } ) = \vec{0}.
\end{equation}

By application of the \textit{Implicit Function Theorem}, we have:

\begin{equation}
\label{eq16}
\vec{f} ( \vec {x},\vec {y},\varepsilon) = \vec{0} \quad \Leftrightarrow \quad \vec{y} = \vec{Y}\left( {\vec {x},\varepsilon } \right),
\end{equation}

implies that:

\begin{equation}
\label{eq17}
D_{\vec {x}} \vec{Y} = - (D_{\vec {y}} \vec{f})^{-1}(D_{\vec {x}} \vec{f}).
\end{equation}

Left multiplication of (\ref{eq15}) by $(D_{\vec {y}} \vec{f})^{-1}$ gives:

\begin{equation}
\label{eq18}
(D_{\vec {y}} \vec{f})^{-1}(D_{\vec {x}} \vec{f} )\vec{f} ( {\vec {x},\vec {y},\varepsilon } ) + \varepsilon \vec{g} ( {\vec {x},\vec {y},\varepsilon } ) = \vec{0}.
\end{equation}

Taking into account (\ref{eq16}) \& (\ref{eq17}), we find the following equation:

\begin{equation}
\label{eq19}
(D_{\vec {x}} \vec{Y} ) \vec{f} ( {\vec {x},\vec {Y}(\vec{x}, \e),\varepsilon } ) = \varepsilon \vec{g} ( {\vec {x},\vec {Y}(\vec{x}, \e),\varepsilon } ),
\end{equation}

which is absolutely identical to (\ref{eq5}). Thus, identity between \textit{Geometric Singular Perturbation Method} and \textit{Successive Approximations Method} is stated.

\subsection{Zero-Derivative Principle}

In 2005, an iterative model reduction method called \textit{Zero-Derivative Principle} (ZDP) was presented by Gear \textit{et al.} \cite{Gear}. Four years later, Zagaris \textit{et al.} \cite{Zagaris} applied the ZDP to \textit{singularly perturbed dynamical systems} (\ref{eq1}) to approximate their \textit{slow invariant manifold}. They considered that if a function is locally invariant under (\ref{eq1}) the zero set of its time derivative provides an even more accurate approximation of the \textit{slow invariant manifold} (SIM) of (\ref{eq1}). In other words, it means that the $k^{th}$ approximation of the \textit{slow invariant manifold} of (\ref{eq1}) is given by the following equation:

\begin{equation}
\label{eq20}
\dfrac{d^{k+1}\vec{x}}{dt^{k+1}} = \dfrac{d^k\vec{f}}{dt^k}\left( {\vec {x},\vec {y},\varepsilon } \right) = \vec{0}.
\end{equation}

According to Beno\^{i}t \textit{et al.} \cite{Benoit2015}: ``estimates [of the SIM] can be obtained in terms of approximations of the slow manifold to order $\e^k$ when looking at the zeros of the $k^{th}$ derivative of the fast components of the original singularly perturbed system.'' First of all, let's notice that such result has been already stated and published by Rossetto \cite{Rossetto1986, Rossetto1987} and has not been quoted in any reference. In his works, Rossetto \cite{Rossetto1986, Rossetto1987} presented his \textit{Successive Approximations Method} and stated that:

\begin{quote}
``The $m^{th}$ order approximation of slow motion is solution of the dynamical system $\vec{f}^{(m)} = \vec{0}$.''
\end{quote}

Thus, according to the ZDP, if $\vec{x}' = \vec {f}\left( {\vec {x},\vec {y},\varepsilon } \right) = \vec{0}$ provides the zero-order approximation in $\varepsilon$ of the \textit{slow invariant manifold} of (\ref{eq1}), $\vec{x}'' = \vec{0}$ will define its first-order approximation in $\varepsilon$. So, according to (\ref{eq1}), we have the following equation:

\begin{equation}
\label{eq21}
\vec{x}''= (D_{\vec {x}} \vec{f}) \vec{x}' + (D_{\vec {y}} \vec{f}) \vec{y}' = \vec{0},
\end{equation}

which reads:

\begin{equation}
\label{eq22}
(D_{\vec {x}} \vec{f}) \vec {f}\left( {\vec {x},\vec {y},\varepsilon } \right)  + \varepsilon (D_{\vec {y}} \vec{f}) \vec {g}\left( {\vec {x},\vec {y},\varepsilon } \right)  = \vec{0}.
\end{equation}

This equation (\ref{eq22}) is absolutely identical to Eq. (\ref{eq15}). This proves that both \textit{Successive Approximations Method} and \textit{Zero-Derivative Principle} are identical. Moreover, since we have already stated above that both \textit{Geometric Singular Perturbation Method} and \textit{Successive Approximations Method} are identical, it follows by transitivity that these three methods are absolutely identical and so belong to the first category called: \textit{Singular Perturbation-Based Methods}.

\section{Curvature-Based Methods}

As previously recalled, beside, \textit{singularly perturbed dynamical systems} (\ref{eq1}) there are \textit{dynamical systems} without an explicit timescale splitting. In this case the \textit{Singular Perturbation-Based Methods}, i.e., GSPM, SAM and ZDP can not be applied anymore. Some of these systems, called \textit{slow-fast dynamical systems}, may be defined by the following set of nonlinear ordinary differential equations:

\begin{equation}
\label{eq23}
\frac{d\vec {X}}{dt} = \overrightarrow  \Im ( \vec {X} ),
\end{equation}

\smallskip

with $\vec {X} = \left[ {x_1 ,x_2 ,...,x_n } \right]^t \in E \subset {\mathbb{R}}^n$ and $\overrightarrow \Im ( \vec {X} ) = \left[ {f_1 ( \vec {X}),f_2 ( \vec {X} ),...,f_n ( \vec {X} )} \right]^t \in E \subset {\mathbb{R}}^n$. The vector $\overrightarrow \Im ( \vec {X} )$ defines a velocity vector field $\dot{\vec{X}} = \vec V$ in $E$ whose components $f_i $ which are supposed to be continuous and infinitely differentiable with respect to all $x_i $ and $t$, i.e. are $C^\infty $ functions in $E$ and with values included in $\mathbb{R}$, satisfy the assumptions of the Cauchy-Lipschitz theorem. For more details, see for example Coddington \& Levinson \cite{cod}. A solution of this system is a \textit{trajectory curve} $\vec {X}\left( t \right)$ tangent (Except at the \textit{fixed points}) to $\overrightarrow \Im$ whose values define the \textit{states} of the \textit{dynamical system} described by the Eq. (\ref{eq23}). According to Ginoux \cite{Gin}, \textit{non-singularly perturbed dynamical systems} (\ref{eq23}) can be considered as \textit{slow-fast dynamical systems} if their Jacobian matrix has at least one \textit{fast eigenvalue}, i.e. with the largest absolute value of the real part over a huge domain of the phase space.

\subsection{Intrinsic Low-Dimensional Manifold Method}

In 1992, Maas and Pope introduced the \textit{Intrinsic Low-Dimensional Manifold} (ILDM) method \cite{Maas}. Within the framework of application of the Tikhonov's theorem \cite{Tikh}, this method uses the fact that in the vicinity of the \textit{slow manifold} the eigenmode associated with the \textit{fast} eigenvalue is evanescent. According to Mass and Pope \cite{Maas}:

\begin{quote}
``If a subspace in the composition space can be found where the system is ``in equilibrium'' with respect to its smallest $m$ eigenvectors, this subspace defines a low-dimensional manifold that is characterized by the fact that movements along it are associated with slow time scales and can be used to simplify chemical kinetics.''
\end{quote}

Thus, IDLM method provides an approximation of the \textit{slow invariant manifold} of the \textit{slow-fast dynamical system} (\ref{eq23}) by considering that its velocity vector field is coplanar to the \textit{slow} eigenvectors of the Jacobian of $\overrightarrow  \Im ( \vec {X} )$.

\subsection{Tangent Linear System Approximation Method}

At the end of the nineties, Rossetto \textit{et al.} \cite{Rossetto1998} presented their \textit{Tangent Linear System Approximation Method}. The application of this method requires that the \textit{slow-fast dynamical system} (\ref{eq23}) satisfies the following assumptions:

\begin{itemize}
\item[(H$_{1}$)] The components $f_i $, of the velocity vector field $\overrightarrow \Im ( \vec {X} )$ defined in $E$ are continuous, $C^\infty $ functions in $E$ and with values included in $\mathbb{R}$.
\item[(H$_{2}$)] The \textit{slow-fast dynamical system} (\ref{eq23}) satisfies the \textit{nonlinear part condition}, i.e., that the influence of the nonlinear part of the Taylor series of the velocity vector field $\overrightarrow \Im ( \vec {X} )$ of this system is overshadowed by the fast dynamics of the linear part.
\end{itemize}

\begin{equation}
\label{eq24}
\overrightarrow \Im ( \vec {X} ) = \overrightarrow \Im ( \vec{X}_0  ) + ( \vec{X} - \vec{X}_0 ) \left.  D_{\vec {X}} \overrightarrow \Im ( \vec {X} ) \right|_{\vec {X}_0}  + O\left( ( \vec {X} - \vec {X}_0  )^2 \right).
\end{equation}

To the \textit{slow-fast dynamical system} (\ref{eq23}) is associated a \textit{tangent linear system} defined as follows:

\begin{equation}
\label{eq25}
\frac{d\delta \vec {X}}{dt} = J( \vec {X}_0 )\delta \vec {X},
\end{equation}

where $\delta \vec {X} = \vec {X} - \vec {X}_0 $, $\vec{X}_0 = \vec{X}( t_0 )$ and $J(\vec {X}_0 ) = \left.  D_{\vec {X}} \overrightarrow \Im ( \vec {X} ) \right|_{\vec {X}_0}$ is the Jacobian matrix. Thus, the solution of the \textit{tangent linear system} (\ref{eq25}) reads:

\begin{equation}
\label{eq26}
\delta \vec {X} = e^{J\left( {\vec {X}_0 } \right)\left( {t - t_0 } \right)}\delta \vec {X}\left( {t_0 } \right).
\end{equation}

So,

\begin{equation}
\label{eq27}
\delta \vec {X} = \sum\limits_{i = 1}^n {a_i} \vec{Y}_{\lambda_i}.
\end{equation}

where $n$ is the dimension of the eigenspace, $a_{i}$ represents coefficients depending explicitly on the co-ordinates of space and implicitly on time and $\vec{Y}_{\lambda_i}$ the eigenvectors associated with the Jacobian matrix of the \textit{tangent linear system}. Thus, according to the TLSA method we have the following proposition:

\begin{proposition}
In the vicinity of the \textit{slow manifold} the velocity of the \textit{slow-fast dynamical system} (\ref{eq23}) and that of the \textit{tangent linear system} (\ref{eq25}) merge.

\begin{equation}
\label{eq28}
\frac{d\delta \vec {X}}{dt} = \vec{V} _T \approx \vec{V},
\end{equation}

where $\vec{V}_T$ represents the velocity vector associated with the \textit{tangent linear system}.
\end{proposition}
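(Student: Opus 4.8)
The plan is to extract the statement from the Taylor expansion (\ref{eq24}) together with the \textit{nonlinear part condition} (H$_2$), and then to use the eigenmode decomposition (\ref{eq27}) to check that whatever residual separates $\vec{V}$ from $\vec{V}_T$ collapses onto the \textit{slow} eigendirections once the \textit{fast} transient has relaxed --- which is exactly what being in the vicinity of the slow manifold means here.

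First I would record the one exact identity available: since $\vec{X}_0 = \vec{X}(t_0)$ is a constant vector, differentiating $\delta\vec{X} = \vec{X} - \vec{X}_0$ along the flow of (\ref{eq23}) gives $\frac{d\delta\vec{X}}{dt} = \frac{d\vec{X}}{dt} = \overrightarrow{\Im}(\vec{X}) = \vec{V}$, so the content of the proposition is really the comparison of $\vec{V}$ with the linear field $\vec{V}_T = J(\vec{X}_0)\,\delta\vec{X}$ of (\ref{eq25}). Inserting (\ref{eq24}) into $\vec{V}$ yields $\vec{V} = \overrightarrow{\Im}(\vec{X}_0) + J(\vec{X}_0)\,\delta\vec{X} + O(\|\delta\vec{X}\|^2) = \overrightarrow{\Im}(\vec{X}_0) + \vec{V}_T + O(\|\delta\vec{X}\|^2)$, hence $\vec{V} - \vec{V}_T = \overrightarrow{\Im}(\vec{X}_0) + O(\|\delta\vec{X}\|^2)$. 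The two velocities therefore merge precisely when the quadratic remainder and the drift term $\overrightarrow{\Im}(\vec{X}_0)$ are both negligible relative to the dominant, i.e. fast, part of $\vec{V}_T$.

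The quadratic remainder is dealt with directly by hypothesis (H$_2$): the \textit{nonlinear part condition} is by definition the assumption that the term $O(\|\delta\vec{X}\|^2)$ is overshadowed by the fast component of $J(\vec{X}_0)\,\delta\vec{X}$ throughout the region where the trajectory shadows the tangent linear system, i.e. on a neighbourhood of the slow manifold. For the drift term I would pass to the eigenbasis $\{\vec{Y}_{\lambda_i}\}$ of $J(\vec{X}_0)$ and write, as in (\ref{eq27}), $\delta\vec{X} = \sum_i a_i\,\vec{Y}_{\lambda_i}$, so that $\vec{V}_T = \sum_i a_i\lambda_i\,\vec{Y}_{\lambda_i}$; the term carrying the \textit{fast} eigenvalue --- the one with largest $|\mathrm{Re}\,\lambda|$ --- dominates $\vec{V}_T$ as long as that mode is alive. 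In the vicinity of the slow manifold this fast eigenmode is evanescent, which is the Tikhonov / ILDM picture recalled in Section III, so $\delta\vec{X}$, and with it $\vec{V}$ and $\vec{V}_T$, lie up to the controlled remainder in the span of the \textit{slow} eigenvectors; on that subspace the surviving $\overrightarrow{\Im}(\vec{X}_0)$ is nothing but the genuine slow drift along the manifold, of the same slow order as what remains of $\vec{V}_T$, and so it does not separate $\vec{V}$ from $\vec{V}_T$ at leading order. This gives (\ref{eq28}).

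The step I expect to be the real obstacle is making ``overshadowed by the fast dynamics of the linear part'' quantitative: one wants a uniform estimate comparing the quadratic remainder and $\|\overrightarrow{\Im}(\vec{X}_0)\|$ against the size of the fast eigencomponent of $J(\vec{X}_0)\,\delta\vec{X}$, valid on an honest neighbourhood of the slow manifold and uniformly in the base point $\vec{X}_0$. Since (\ref{eq23}) carries no explicit small parameter, this is exactly where (H$_2$) must be postulated rather than derived, so the proposition is best read as conditional on that hypothesis; granted it, the eigenmode bookkeeping above is routine.
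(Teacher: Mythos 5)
You should know at the outset that the paper does not actually prove this proposition: it is introduced with the words ``according to the TLSA method we have the following proposition,'' carries no proof environment, and the only justification on offer is the \emph{nonlinear part condition} (H$_2$) together with the Tikhonov-type evanescence of the fast eigenmode invoked a few lines later. Your reconstruction assembles exactly the ingredients the paper itself points to --- the Taylor expansion (\ref{eq24}), hypothesis (H$_2$) to kill the quadratic remainder, the eigendecomposition (\ref{eq27}), and the evanescence of the fast mode near the slow manifold --- so in substance it is the same argument, only made explicit; your closing remark that (H$_2$) must be postulated rather than derived is also the honest reading of the text.

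The one place where your extra bookkeeping outruns the paper is the drift term $\overrightarrow{\Im}(\vec{X}_0)$, and there your argument does not close. You correctly obtain $\vec{V}-\vec{V}_T=\overrightarrow{\Im}(\vec{X}_0)+O(\|\delta\vec{X}\|^2)$, and (H$_2$) says nothing about the first summand. Observing that $\overrightarrow{\Im}(\vec{X}_0)$ is slow, ``of the same slow order as what remains of $\vec{V}_T$,'' shows only that both vectors are small, not that they are close in any relative sense: at $t=t_0$ one has $\delta\vec{X}=\vec{0}$, hence $\vec{V}_T=\vec{0}$ while $\vec{V}=\overrightarrow{\Im}(\vec{X}_0)\neq\vec{0}$ away from the critical set, so the two velocities cannot merge pointwise there. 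This matters because what is actually extracted from (\ref{eq28}) in (\ref{eq29})--(\ref{eq31}) is the \emph{direction} of $\vec{V}$ (its collinearity or coplanarity with the slow eigenvectors), and ``both small'' does not yield ``same direction.'' To be fair, this is a defect inherited from the proposition as stated rather than introduced by you --- the paper sidesteps it entirely by treating the statement as a modeling postulate of the TLSA --- but if you want an argument rather than an axiom you would need either to absorb $\overrightarrow{\Im}(\vec{X}_0)$ into the eigenmode decomposition (for instance by expanding the velocity field itself, not the increment $\delta\vec{X}$, on the eigenbasis of $J$) or to restrict the conclusion to the fast components of $\vec{V}$.
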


Hence, the \textit{Tangent Linear System Approximation} method consists in the projection of the velocity vector field $\vec{V}$ on the eigenbasis associated with the eigenvalues of the functional Jacobian of the \textit{tangent linear system}. Indeed, by taking account of (\ref{eq25}) and (\ref{eq27}) we have according to (\ref{eq28}):

\begin{equation}
\label{eq29}
\frac{d\delta \vec {X}}{dt} = J( \vec {X}_0 )\delta \vec {X} = J( \vec {X}_0 ) \sum\limits_{i = 1}^n {a_i} \vec{Y}_{\lambda_i} = \sum\limits_{i = 1}^n {a_i } J( \vec {X}_0 ) \vec{Y}_{\lambda _i } = \sum\limits_{i = 1}^n {a_i } \lambda _i \vec{Y}_{\lambda_i}.
\end{equation}

Thus, Proposition 1 leads to:

\begin{equation}
\label{eq30}
\frac{d\delta \vec {X}}{dt} = \vec{V}_T \approx \vec{V} = \sum\limits_{i = 1}^n {a_i } \lambda _i \vec{Y}_{\lambda _i}.
\end{equation}

The equation (\ref{eq30}) constitutes in dimension two (resp. dimension three) a condition called \textit{collinearity} (resp. \textit{coplanarity}) condition which provides the analytical equation of the \textit{slow manifold} of the \textit{slow-fast dynamical system} (\ref{eq23}). An alternative proposed by Rossetto \textit{et al.} \cite{Rossetto1998} uses the ``fast'' eigenvector on the left associated with the ``fast'' eigenvalue of the transposed functional Jacobian of the \textit{tangent linear system}. In this case the velocity vector field $\vec{V}$ is then orthogonal with the ``fast'' eigenvector on the left. This constitutes a condition called \textit{orthogonality} condition which provides the analytical equation of the \textit{slow manifold} of the \textit{slow-fast dynamical system} (\ref{eq23}) (See Rossetto \textit{et al.} \cite{Rossetto1998} and Ginoux \textit{et al.} \cite{GiRo1, GiRo2}).

For a two-dimensional \textit{slow-fast dynamical system} (\ref{eq23}) the projection of the velocity vector field $\vec V $ on the eigenbasis reads thus:

\[
\frac{d\delta \vec {X}}{dt} = \vec V _T \approx \vec V = \sum\limits_{i = 1}^2 {a_i } \lambda _i \vec{Y}_{\lambda _i } = \alpha \vec{Y}_{\lambda _1 } + \beta \vec{Y}_{\lambda _2 },
\]

where $\alpha $ and $\beta $ represent coefficients depending explicitly on the co-ordinates of space and implicitly on time and where $\vec{Y}_{\lambda _1 }$ represents the ``fast'' eigenvector and $\vec{Y}_{\lambda _2 }$ the ``slow'' eigenvector. The existence of an evanescent mode in the vicinity of the \textit{slow manifold} implies according to Tikhonov's theorem \cite{Tikh}: $\alpha \ll 1$. We deduce:

\[
\vec V \approx \beta \vec{Y}_{\lambda _2 }.
\]

So, the following \textit{collinearity condition} between the velocity vector field $\vec V$ of a two-dimensional \textit{slow-fast dynamical system} (\ref{eq23}) and the ``slow'' eigenvector $\vec{Y}_{\lambda _2 }$ provides an approximation of its \textit{slow manifold}:

\begin{equation}
\label{eq31}
\vec V \wedge \vec{Y}_{\lambda _2 } = \vec{0}.
\end{equation}

For a three-dimensional \textit{slow-fast dynamical system} (\ref{eq23}) the projection of the velocity vector field $\vec V$ on the eigenbasis reads thus:

\[
\frac{d\delta \vec {X}}{dt} = \vec V _T \approx \vec V = \sum\limits_{i = 1}^3 {a_i } \lambda _i \vec{Y}_{\lambda _i } = \alpha \vec{Y}_{\lambda _1 } + \beta \vec{Y}_{\lambda _2 } + \delta \vec{Y}_{\lambda _3},
\]

where $\alpha $, $\beta $ and $\delta$ represent coefficients depending explicitly on the co-ordinates of space and implicitly on time and where $\vec{Y}_{\lambda _1 }$ represents the ``fast'' eigenvector and $\vec{Y}_{\lambda _2 },  \vec{Y}_{\lambda _3 }$ the ``slow'' eigenvectors. The existence of an evanescent mode in the vicinity of the \textit{slow manifold} implies according to Tikhonov's theorem \cite{Tikh}: $\alpha \ll 1$. We deduce:

\[
\vec V \approx \beta \vec{Y}_{\lambda _2 } + \delta \vec{Y}_{\lambda _3}.
\]

So, the following \textit{coplanarity condition} between the velocity vector field $\vec V$ of a three-dimensional \textit{slow-fast dynamical system} (\ref{eq23}) and the ``slow'' eigenvectors $\vec{Y}_{\lambda _2 }$ and $\vec{Y}_{\lambda _3}$ provides an approximation of its \textit{slow manifold}:

\begin{equation}
\label{eq32}
\vec V .( \vec{Y}_{\lambda _2 } \wedge \vec{Y}_{\lambda _3 }  ) = 0.
\end{equation}

\smallskip

This brief presentation enables to state that both \textit{Intrinsic Low-Dimensional Manifold} method and \textit{Tangent Linear System Approximation} method are absolutely identical. Let's notice that these methods can be also applied to \textit{singularly perturbed dynamical systems} (\ref{eq1}). However, they presented a major drawback since they required the computation of eigenvalues and eigenvectors of the Jacobian which could only be done numerically for \textit{dynamical systems} of dimension greater than two. Moreover, according to the nature of the ``slow'' eigenvalues (real or complex conjugated) the plot of their \textit{slow manifold} analytical equation was difficult even impossible. So, to solve this problem it was necessary to make the \textit{slow manifold} analytical equation independent of the ``slow'' eigenvalues. That's was the aim of the \textit{Flow Curvature Method}.

\subsection{Flow Curvature Method}

Fifteen years ago, Ginoux \textit{et al.} \cite{GiRo1} published the first article in which the \textit{Flow Curvature Method} was presented. They explained that by multiplying the \textit{slow manifold} analytical equation of a two dimensional dynamical system by a ``conjugated'' equation, that of a three dimensional dynamical system by two ``conjugated'' equations make the \textit{slow manifold} analytical equation independent of the ``slow'' eigenvalues of the
\textit{tangent linear system}.

For a two-dimensional \textit{slow-fast dynamical system} (\ref{eq23}), let's multiply equation (\ref{eq31}) by its ``conjugated'' equation, i.e., by an equation in which the eigenvalue $\lambda_2$ is replaced by the eigenvalue $\lambda_1$. Let's notice that the ``conjugated'' equation of the equation (\ref{eq31}) corresponds to the collinearity condition between the velocity vector field $\vec V $ and the fast eigenvector $\vec{Y}_{\lambda _1 }$. The product of the equation (\ref{eq31}) by its ``conjugated'' equation reads:

\begin{equation}
\label{eq33}
( \vec V \wedge \vec{Y}_{\lambda _1 } ) \cdot ( \vec V \wedge \vec{Y}_{\lambda _2 } ) = 0.
\end{equation}

\smallskip

It has been proved in Ginoux \textit{et al.} \cite{GiRo1} that:

\begin{equation}
\label{eq34}
( \vec V \wedge \vec{Y}_{\lambda _1 } ) \cdot ( \vec V \wedge \vec{Y}_{\lambda _2 } ) = \left\| \vec \gamma \wedge \vec V \right\|^2 = 0,
\end{equation}

\smallskip

where $\vec \gamma = \dot{\vec V} = \ddot{\vec X}$ represents the acceleration vector field, i.e., the time derivative of the velocity vector field (\ref{eq23}). But, in the framework of \textit{Differential Geometry}, \textit{curvature} of trajectory curve integral of \textit{slow-fast dynamical system} (\ref{eq23}) is defined by:

\begin{equation}
\label{eq35}
\dfrac{1}{ \mathcal{R} } = \dfrac{\left\| \vec \gamma \wedge \vec V \right\|}{\left\| \vec V  \right\|^3}
\end{equation}

where $\mathcal{R}$ represents the \textit{radius of curvature}. Thus, it followed from this result that the \textit{slow manifold} of a two-dimensional  \textit{slow-fast dynamical system} (\ref{eq23}) can be directly approximated from the \textit{collinearity condition} between the velocity vector field $\vec V = \dot{\vec X}$ (\ref{eq23}) and its acceleration vector field $\vec \gamma = \dot{\vec V} = \ddot{\vec X}$. More than ten years ago, Ginoux \textit{et al.} \cite{GiRo2} proved that this \textit{collinearity condition} can be written as:

\begin{equation}
\label{eq36}
\det ( \dot{\vec X}, \ddot{\vec X} ) = 0.
\end{equation}

This led Ginoux \textit{et al.} \cite{GiRo2} to the following proposition:

\begin{proposition}
The location of the points where the local curvature of the trajectory curves integral of a two-dimensional dynamical system defined by (\ref{eq23}) vanishes, directly provides the slow manifold analytical equation associated to this system.
\end{proposition}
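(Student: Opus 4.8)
The plan is to reduce the vanishing-curvature condition in dimension two to the determinant equation (\ref{eq36}), and then to identify its solution set with the slow manifold approximation produced by the \textit{Tangent Linear System Approximation} collinearity condition (\ref{eq31}), using Proposition 1.

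First I would write the curvature of a planar trajectory curve in coordinates. For $\vec X(t) = (x_1(t), x_2(t)) \in \mathbb{R}^2$, embedding the plane in $\mathbb{R}^3$, one has $\vec\gamma \wedge \vec V = (0,0,\, \gamma_1 V_2 - \gamma_2 V_1)$, hence $\| \vec\gamma \wedge \vec V \| = | \det(\dot{\vec X}, \ddot{\vec X}) |$. Thus (\ref{eq35}) reads $1/\mathcal{R} = | \det(\dot{\vec X}, \ddot{\vec X}) | / \| \vec V \|^{3}$. Away from the fixed points $\vec V \neq \vec 0$, so $\|\vec V\|^3 > 0$, and the locus where the local curvature of the trajectory curves vanishes is exactly $\{ \det(\dot{\vec X}, \ddot{\vec X}) = 0 \}$, i.e. Eq. (\ref{eq36}). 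This also makes precise that one is looking at genuine inflection points, the fixed points being excluded since there the curvature is undefined.

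Next I would compute the acceleration intrinsically: differentiating $\dot{\vec X} = \overrightarrow\Im(\vec X)$ along a trajectory and using the chain rule gives $\vec\gamma = \ddot{\vec X} = D_{\vec X}\overrightarrow\Im(\vec X)\, \dot{\vec X} = J(\vec X)\, \vec V$, where $J(\vec X)$ is the Jacobian of $\overrightarrow\Im$. So (\ref{eq36}) becomes $\det\big( \vec V,\, J(\vec X)\vec V \big) = 0$. I would then expand $\vec V$ on the eigenbasis $\{\vec{Y}_{\lambda_1}, \vec{Y}_{\lambda_2}\}$ of $J(\vec X)$ — the eigenbasis of the \textit{tangent linear system} at the running point — writing $\vec V = \alpha \vec{Y}_{\lambda_1} + \beta \vec{Y}_{\lambda_2}$ exactly as in the TLSA construction, so that $J(\vec X)\vec V = \alpha \lambda_1 \vec{Y}_{\lambda_1} + \beta \lambda_2 \vec{Y}_{\lambda_2}$. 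By bilinearity and antisymmetry of the determinant,
\[
\det\big(\vec V,\, J(\vec X)\vec V\big) = \alpha\beta\,(\lambda_2 - \lambda_1)\, \det\big(\vec{Y}_{\lambda_1}, \vec{Y}_{\lambda_2}\big).
\]
Because the \textit{slow-fast} hypothesis supplies a fast eigenvalue with largest $|\mathrm{Re}|$, distinct from the slow one, we have $\lambda_1 \neq \lambda_2$, the eigenvectors are independent, and $\det(\vec{Y}_{\lambda_1}, \vec{Y}_{\lambda_2}) \neq 0$. Hence $\det(\dot{\vec X}, \ddot{\vec X}) = 0 \iff \alpha\beta = 0$, which is precisely the product of (\ref{eq31}) by its conjugated equation, i.e. Eqs. (\ref{eq33})--(\ref{eq34}). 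When $\lambda_1, \lambda_2$ are complex conjugates the same computation is carried out over $\mathbb{C}$, noting that $\det(\vec V, J\vec V)$ is a real scalar; when they coincide the system is not of the generic slow-fast type considered here.

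Finally I would close the argument using Proposition 1 and Tikhonov's theorem. The zero-curvature set factors into the \textit{slow} branch $\{\alpha = 0\} = \{ \vec V \wedge \vec{Y}_{\lambda_2} = \vec 0 \}$ and the \textit{fast} (conjugate) branch $\{\beta = 0\} = \{ \vec V \wedge \vec{Y}_{\lambda_1} = \vec 0 \}$. In the vicinity of the slow manifold the fast eigenmode is evanescent ($\alpha \ll 1$), so it is the first branch that carries the trajectories, and this branch is exactly the TLSA collinearity condition (\ref{eq31}), which by Proposition 1 is the slow manifold analytical equation. Rewriting that branch as $\det(\dot{\vec X}, \ddot{\vec X}) = 0$ removes any reference to the slow eigenvalue and eigenvector, so the vanishing-curvature locus \emph{directly} provides the slow manifold equation, as claimed. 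The main obstacle is this last identification: the naive locus $\{\det(\dot{\vec X}, \ddot{\vec X}) = 0\}$ is strictly larger than the slow manifold since it also contains the fast/conjugate branch, and one must invoke the evanescence of the fast mode (or a normal-hyperbolicity argument) to single out the correct component; some care is also needed for the complex-conjugate eigenvalue case and to keep the fixed points outside the statement.
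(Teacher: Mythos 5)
Your proposal is correct and follows essentially the same route as the paper: reduce vanishing curvature to $\det(\dot{\vec X},\ddot{\vec X})=0$, identify this with the product of the TLSA collinearity condition (\ref{eq31}) by its ``conjugated'' equation via the eigenbasis decomposition $\vec V=\alpha\vec{Y}_{\lambda_1}+\beta\vec{Y}_{\lambda_2}$, and invoke Proposition 1 to single out the slow branch. The only differences are that you prove the key identity (\ref{eq34}) explicitly (the paper merely cites it from Ginoux \textit{et al.} \cite{GiRo1}) and that you state up front that the zero-curvature locus also contains the fast/conjugate branch, a point the paper only acknowledges later in its discussion of ``ghost parts''.
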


\smallskip

For a three-dimensional \textit{slow-fast dynamical system} (\ref{eq23}), equation (\ref{eq32}) must be multiplied by two ``conjugated'' equations obtained by
circular shifts of the eigenvalues. The product of equation (\ref{eq32}) by its ``conjugated'' equation reads:

\begin{equation}
\label{eq37}
\left[ \vec V \cdot ( \vec{Y}_{\lambda_1} \wedge \vec{Y}_{\lambda_2} ) \right]
\cdot \left[ \vec V \cdot ( \vec{Y}_{\lambda_2} \wedge \vec{Y}_{\lambda_3} ) \right]
\cdot \left[ \vec V \cdot ( \vec{Y}_{\lambda_1} \wedge \vec{Y}_{\lambda_3} ) \right] = 0.
\end{equation}

\smallskip

It has been proved in Ginoux \textit{et al.} \cite{GiRo1} that:

\begin{equation}
\label{eq38}
\left[ \vec V \cdot ( \vec{Y}_{\lambda_1} \wedge \vec{Y}_{\lambda_2} ) \right]
\cdot \left[ \vec V \cdot ( \vec{Y}_{\lambda_2} \wedge \vec{Y}_{\lambda_3} ) \right]
\cdot \left[ \vec V \cdot ( \vec{Y}_{\lambda_1} \wedge \vec{Y}_{\lambda_3} ) \right] =
\dot{\vec \gamma} \cdot (\vec \gamma \wedge \vec V ) = 0,
\end{equation}

\smallskip

where $\dot{\vec \gamma}$ represents the over-acceleration vector field. But, in the framework of \textit{Differential Geometry}, \textit{torsion} of trajectory curve integral of \textit{slow-fast dynamical system} (\ref{eq23}) is defined by:

\begin{equation}
\label{eq39}
\frac{1}{ \mathcal{T} } = - \frac{\dot {\vec {\gamma }} \cdot \left( {\vec \gamma \wedge \vec V } \right)}{\left\| {\vec \gamma \wedge \vec V }\right\|^2},
\end{equation}

where $\mathcal{T}$ represents the \textit{radius of} \textit{torsion}. Thus, it followed from this result that the \textit{slow manifold} of a three-dimensional  \textit{slow-fast dynamical system} (\ref{eq23}) can be directly approximated from the \textit{coplanarity condition} between the velocity vector field $\vec V = \dot{\vec X}$ (\ref{eq23}), its acceleration vector field $\vec \gamma$ and its over-acceleration $\dot{\vec \gamma}$. More than ten years ago, Ginoux \textit{et al.} \cite{GiRo2} proved that this \textit{coplanarity condition} can be written as:

\begin{equation}
\label{eq40}
\det ( \dot{\vec X}, \ddot{\vec X}, \dddot{\vec X} ) = 0.
\end{equation}

\smallskip

This led Ginoux \textit{et al.} \cite{GiRo2} to the following proposition:

\begin{proposition}
The location of the points where the local torsion of the trajectory curves integral of a three-dimensional dynamical system defined by (\ref{eq23}) vanishes, directly provides the slow manifold analytical equation associated to this system.
\end{proposition}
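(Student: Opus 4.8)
The plan is to transcribe, one dimension higher, the argument that yields Proposition~2, with the torsion playing the role of the curvature and two ``conjugated'' equations replacing the single one used in the planar case. First I would start from the coplanarity condition~(\ref{eq32}) which, by the \textit{Tangent Linear System Approximation} (Proposition~1) under hypotheses~(H$_{1}$)--(H$_{2}$), already furnishes an approximation of the slow manifold of the three-dimensional system~(\ref{eq23}). Multiplying~(\ref{eq32}) by the two equations obtained from it by circular shifts of $(\lambda_1,\lambda_2,\lambda_3)$ produces~(\ref{eq37}); since each factor is the equation of a surface, the zero set of the product~(\ref{eq37}) contains the zero set of~(\ref{eq32}), so nothing is lost and the only thing to watch is the branches that are gained.

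The algebraic core is the identity~(\ref{eq38}), whose computation I would reproduce following Ginoux \textit{et al.}~\cite{GiRo1}. Expanding the velocity in the eigenbasis of $J(\vec{X}_0)$ as $\vec{V} = \alpha\,\vec{Y}_{\lambda_1} + \beta\,\vec{Y}_{\lambda_2} + \delta\,\vec{Y}_{\lambda_3}$ and using $\vec{\gamma} = \dot{\vec{V}} = J(\vec{X}_0)\vec{V}$ together with $\dot{\vec{\gamma}} = J(\vec{X}_0)\vec{\gamma}$ --- the replacement of $J(\vec{X})$ by the frozen Jacobian $J(\vec{X}_0)$, hence the neglect of $\dot{J}$, being legitimate precisely under the nonlinear-part condition~(H$_{2}$) --- one obtains $\vec{\gamma} = \alpha\lambda_1\vec{Y}_{\lambda_1} + \beta\lambda_2\vec{Y}_{\lambda_2} + \delta\lambda_3\vec{Y}_{\lambda_3}$ and $\dot{\vec{\gamma}} = \alpha\lambda_1^2\vec{Y}_{\lambda_1} + \beta\lambda_2^2\vec{Y}_{\lambda_2} + \delta\lambda_3^2\vec{Y}_{\lambda_3}$. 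In this basis each triple product $\vec{V}\cdot(\vec{Y}_{\lambda_i}\wedge\vec{Y}_{\lambda_j})$ collapses to $\pm$ the missing modal coefficient times $\det(\vec{Y}_{\lambda_1},\vec{Y}_{\lambda_2},\vec{Y}_{\lambda_3})$, so the left-hand side of~(\ref{eq37}) is proportional to $\alpha\beta\delta$, while $\dot{\vec{\gamma}}\cdot(\vec{\gamma}\wedge\vec{V}) = \det(\dot{\vec{\gamma}},\vec{\gamma},\vec{V})$ factors through a Vandermonde determinant in the $\lambda_i$ and is again proportional to $\alpha\beta\delta$. Since the $\lambda_i$ are distinct and the $\vec{Y}_{\lambda_i}$ independent, both proportionality factors are nonzero, so~(\ref{eq37}) and $\dot{\vec{\gamma}}\cdot(\vec{\gamma}\wedge\vec{V})$ vanish on exactly the same set, the union of the three modal loci $\alpha = 0$, $\beta = 0$, $\delta = 0$.

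It remains to isolate the relevant branch. In the vicinity of the slow manifold the fast eigenmode is evanescent, which by Tikhonov's theorem means $\alpha \ll 1$, whereas the slow coefficients $\beta$ and $\delta$ are generically bounded away from zero there; hence, near the slow manifold, the equation $\dot{\vec{\gamma}}\cdot(\vec{\gamma}\wedge\vec{V}) = 0$ reduces to its branch $\alpha = 0$, which is nothing but the coplanarity condition~(\ref{eq32}) of the \textit{Tangent Linear System Approximation}. Thus $\dot{\vec{\gamma}}\cdot(\vec{\gamma}\wedge\vec{V}) = 0$ approximates the slow manifold of~(\ref{eq23}) to the same order as~(\ref{eq32}). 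Because $\vec{V} = \dot{\vec{X}}$, $\vec{\gamma} = \ddot{\vec{X}}$ and $\dot{\vec{\gamma}} = \dddot{\vec{X}}$, the mixed product equals, up to sign and a reordering of rows, $\det(\dot{\vec{X}},\ddot{\vec{X}},\dddot{\vec{X}})$, which is~(\ref{eq40}). Finally, comparing with the differential-geometric definition~(\ref{eq39}) of the torsion, $1/\mathcal{T} = -\dot{\vec{\gamma}}\cdot(\vec{\gamma}\wedge\vec{V})/\left\|\vec{\gamma}\wedge\vec{V}\right\|^2$, one concludes that at every point where the curvature does not vanish (so that $\left\|\vec{\gamma}\wedge\vec{V}\right\|^2 \neq 0$) the torsion vanishes if and only if $\det(\dot{\vec{X}},\ddot{\vec{X}},\dddot{\vec{X}}) = 0$. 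Therefore the locus of zero torsion of the trajectory curves of~(\ref{eq23}) coincides with~(\ref{eq40}) and, by the preceding step, directly yields the analytical equation of the slow manifold, as claimed.

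The step I expect to be the main obstacle is the control of the two parasitic branches $\beta = 0$ and $\delta = 0$ introduced by multiplying~(\ref{eq32}) by its conjugates: one must show, using the nonlinear-part condition~(H$_{2}$) and the evanescence of the fast eigenmode, that these branches remain at a distance $O(1)$ from the slow manifold, so that they neither contaminate the approximation nor degrade its order in $\varepsilon$. For this argument, and for the reduction $\dot{\vec{\gamma}} \approx J(\vec{X}_0)^2\vec{V}$, the genericity hypotheses implicit already in Proposition~2 --- distinct eigenvalues, independent eigenvectors and nonvanishing curvature along the slow manifold --- should be made explicit.
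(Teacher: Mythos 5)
Your proposal is correct and takes essentially the same route as the paper: start from the coplanarity condition~(\ref{eq32}), multiply by the two conjugated equations to obtain~(\ref{eq37}), invoke the identity~(\ref{eq38}), and conclude via the torsion formula~(\ref{eq39}) that the locus of zero torsion is~(\ref{eq40}). The only difference is one of detail rather than method: the paper cites Ginoux \emph{et al.} for the identity~(\ref{eq38}) and postpones the discussion of the parasitic (``ghost'') branches to the examples section, whereas you supply the eigenbasis/Vandermonde computation and flag those branches explicitly.
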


\smallskip

Then, Ginoux \textit{et al.} \cite{GiRo2} and Ginoux \cite{Gin} generalized these results to any $n$-dimensional \textit{slow-fast dynamical system} (\ref{eq23}) as well as any $n$-dimensional \textit{singularly perturbed dynamical system} (\ref{eq1}). This led to the following general proposition which encompasses proposition 2 \& 3:

\begin{proposition}
The location of the points where the curvature of the flow, i.e., the curvature of the trajectory curves of any $n$-dimensional dynamical system (\ref{eq1}) or (\ref{eq23}) vanishes, directly provides its $(n - 1)$-dimensional slow invariant manifold analytical equation which reads:

\begin{equation}
\label{eq41} \phi ( \vec {X} ) = \dot {\vec {X}} \cdot ( \ddot {\vec {X}} \wedge \dddot {\vec {X}} \wedge \ldots \wedge \mathop {\vec {X}}\limits^{\left( n \right)} ) = det ( \dot {\vec {X}},\ddot {\vec {X}},\dddot {\vec {X}},\ldots,\mathop {\vec {X}}\limits^{\left( n \right)}  ) = 0.
\end{equation}

\end{proposition}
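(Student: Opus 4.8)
The plan is to carry the argument already used for Propositions~2 and~3 through to arbitrary dimension by linear algebra. First I would recall from the Tangent Linear System Approximation (Eqs.~(\ref{eq25})--(\ref{eq30})) that in the vicinity of the slow manifold the successive derivatives of the state vector inherit the exponential structure of the tangent linear flow: iterating $d/dt$ on $\delta\vec{X}=\sum_{i=1}^n a_i\vec{Y}_{\lambda_i}$ with $a_i\propto e^{\lambda_i(t-t_0)}$ yields
\[
\mathop{\vec X}\limits^{(k)} \;\approx\; \sum_{i=1}^n a_i\,\lambda_i^{\,k}\,\vec Y_{\lambda_i}, \qquad k=1,\dots,n,
\]
where, exactly as in the passage from (\ref{eq25}) to (\ref{eq30}), the $a_i$, the eigenvalues $\lambda_i$ and the eigenvectors $\vec{Y}_{\lambda_i}$ are treated as slowly varying — this is the content of the \textit{nonlinear part condition} (H$_2$).

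Next I would form the product of the $n$ ``conjugated'' coplanarity conditions generalising (\ref{eq33}) and (\ref{eq37}): for each $m\in\{1,\dots,n\}$ take the scalar $\vec V\cdot\big(\vec Y_{\lambda_1}\wedge\cdots\wedge\widehat{\vec Y_{\lambda_m}}\wedge\cdots\wedge\vec Y_{\lambda_n}\big)$, obtained by deleting the $m$-th slow eigenvector. Writing it as $\pm\det\big(\vec V,\vec Y_{\lambda_1},\dots,\widehat{\vec Y_{\lambda_m}},\dots,\vec Y_{\lambda_n}\big)$ and substituting $\vec V=\sum_i a_i\lambda_i\vec Y_{\lambda_i}$, every summand but $i=m$ duplicates a column and dies, so each factor equals $\pm\,a_m\lambda_m\,\Delta$ with $\Delta:=\det(\vec Y_{\lambda_1},\dots,\vec Y_{\lambda_n})$, and the whole product is $\pm\big(\prod_m a_m\big)\big(\prod_m\lambda_m\big)\,\Delta^{\,n}$.

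Independently I would expand $\phi(\vec X)=\det(\dot{\vec X},\ddot{\vec X},\dots,\mathop{\vec X}\limits^{(n)})$ by the same substitution, using multilinearity and antisymmetry of the determinant: the only surviving index tuples are the permutations $\sigma\in S_n$, whence
\begin{equation*}
\begin{aligned}
\phi(\vec X) &= \Big(\prod_{i=1}^n a_i\Big)\,\Delta\sum_{\sigma\in S_n}\mathrm{sgn}(\sigma)\prod_{k=1}^n\lambda_{\sigma(k)}^{\,k} \\[2pt]
&= \Big(\prod_{i=1}^n a_i\Big)\,\Delta\,\Big(\prod_{i=1}^n\lambda_i\Big)\prod_{1\le i<j\le n}(\lambda_j-\lambda_i),
\end{aligned}
\end{equation*}
the last step being the evaluation of the Vandermonde-type determinant $\det[\lambda_j^{\,k}]$ (factor $\lambda_j$ out of each column). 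Comparing the two expressions, $\phi$ and the product of conjugated equations differ only by the factors $\Delta^{\,n-1}$ and $\prod_{i<j}(\lambda_j-\lambda_i)$, which are nonzero on the region where the eigenvalues are distinct and nonzero and the eigenvectors form a basis. Hence $\phi(\vec X)=0$ is equivalent there to $\prod_i a_i=0$; since Tikhonov's theorem makes the fast coefficient $a_1$ evanescent near the critical manifold (the reasoning used for (\ref{eq31}) and (\ref{eq32})), the branch of $\{\phi=0\}$ adjacent to $M_0$ is exactly $\{a_1=0\}$, i.e.\ the ILDM/TLSA slow-manifold condition, which by Propositions~2 and~3 is the $(n-1)$-dimensional slow invariant manifold to the appropriate order in $\varepsilon$.

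The determinant bookkeeping is routine; the step I expect to be the real obstacle is the first one — justifying $\mathop{\vec X}\limits^{(k)}\approx\sum_i a_i\lambda_i^{\,k}\vec Y_{\lambda_i}$, i.e.\ that repeated differentiation acts essentially only on the exponential factors and not on the slow drift of $a_i$, $\lambda_i$ and $\vec Y_{\lambda_i}$. This is precisely where the approximation rather than an exact identity enters, and it is what pins down the order in $\varepsilon$ up to which $\{\phi=0\}$ and the Fenichel manifold $M_\varepsilon$ coincide; making the ``up to suitable order'' claim quantitative would require a careful error estimate here via the Taylor expansion (\ref{eq24}) and hypothesis (H$_2$).
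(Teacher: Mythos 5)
Your proposal is correct and follows essentially the same route as the paper, which for $n=2,3$ forms the product of the ``conjugated'' TLSA collinearity/coplanarity conditions (Eqs.~(\ref{eq33}), (\ref{eq37})) and identifies it with the curvature/torsion determinant (Eqs.~(\ref{eq34}), (\ref{eq38})), deferring the general $n$-dimensional case to the references [GiRo2, Gin] where the same eigenbasis expansion is used; your Vandermonde computation supplies exactly the identity the paper cites rather than proves. You also correctly isolate the genuine hypothesis — that repeated differentiation acts only through powers of $J$, i.e.\ $\mathop{\vec X}\limits^{(k)}=J^{k-1}\vec V$ requires $dJ/dt=0$ — which is precisely the local stationarity condition the paper invokes in the proof of Proposition~5, and your observation that the extra branches $a_i=0$ ($i\ge 2$) account for the ``ghost parts'' matches the paper's discussion.
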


\begin{proof}
For proof of this proposition 4, see Ginoux \textit{et al.} \cite{GiRo2} and Ginoux \cite{Gin}.
\end{proof}

Let's notice that within the framework of \textit{Differential Geometry}, $n$-dimensional \textit{smooth curves}, i.e., \textit{smooth curves} in Euclidean $n-$space are defined by a \textit{regular parametric representation} in \textit{terms of arc length} also called \textit{natural representation} or \textit{unit speed parametrization}. According to Herman Gluck \cite{Gluck} local metrics properties of \textit{curvatures} may be directly deduced from \textit{curves parametrized in terms of time} and so \textit{natural representation} is not necessary. This fundamental result allows applying the \textit{Grahm-Schmidt orthogonalization process}, with a collection of $i = 1, \ldots, n$ vectors $\vec{u}_i(t)$ depending explicitly on \textit{time} (not on \textit{arc length}) and forming an orthogonal basis, to deduce the expression of \textit{curvatures} of \textit{trajectory curves} integral of \textit{dynamical systems} (\ref{eq1}) or (\ref{eq23}) in Euclidean $n-$space.\\

Then, to prove that the \textit{slow manifold} (\ref{eq41}) is locally invariant by the flow of (\ref{eq23}), we use \textit{Darboux invariance theorem}. According to Schlomiuk \cite{Schlomiuk} and Llibre \textit{et al.} \cite{LLibreMedrado} it seems that Gaston Darboux \cite{Darboux} has been the first to define the concept of \textit{invariant manifold} which can be presented as follows:

\begin{proposition}

The \textit{manifold} defined by $\phi ( \vec {X} ) = 0$ where $\phi $ is a $C^1$ in an open set U is \textit{invariant} with respect to the flow of (\ref{eq1}) or (\ref{eq23}) if there exists a $C^1$ function denoted $k( \vec {X} )$ and called cofactor which satisfies:

\begin{equation}
\label{eq42}
L_{\vec V } \phi ( \vec{X} ) = k ( \vec {X} ) \phi ( \vec {X} ),
\end{equation}

for all $\vec {X} \in U$ and with the Lie derivative operator defined as:

\[
L_{\vec V } \phi = \vec V \cdot \vec \nabla \phi = \sum\limits_{i = 1}^n {\frac{\partial \phi }{\partial x_i }\dot {x}_i } = \frac{d\phi }{dt}.
\]

\end{proposition}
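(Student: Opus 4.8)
The plan is to reduce the claim to a uniqueness statement for a scalar linear ordinary differential equation along trajectories, which is the standard argument behind Darboux's invariance theorem. First I would fix an arbitrary trajectory curve $\vec{X}(t)$ of (\ref{eq23}) --- the case of (\ref{eq1}) being formally identical --- with $\vec{X}(t_0) = \vec{X}_0$ where $\vec{X}_0$ lies on the manifold, i.e. $\phi(\vec{X}_0) = 0$, and then introduce the scalar function $\psi(t) := \phi(\vec{X}(t))$, defined on the maximal interval of $t$ for which the trajectory stays in the open set $U$. Since $\phi$ is $C^1$ and $t \mapsto \vec{X}(t)$ is differentiable, the chain rule gives $\dot{\psi}(t) = \vec{\nabla}\phi(\vec{X}(t)) \cdot \dot{\vec{X}}(t) = (L_{\vec{V}}\phi)(\vec{X}(t))$, and by the cofactor hypothesis (\ref{eq42}) the right-hand side equals $k(\vec{X}(t))\,\phi(\vec{X}(t)) = k(\vec{X}(t))\,\psi(t)$.

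Next I would observe that $\psi$ therefore solves the scalar linear homogeneous equation $\dot{\psi} = a(t)\,\psi$ with coefficient $a(t) := k(\vec{X}(t))$, and that $a$ is continuous in $t$ because $k$ is $C^1$ (hence continuous) and $\vec{X}(\cdot)$ is continuous. Integrating explicitly yields $\psi(t) = \psi(t_0)\,\exp\bigl(\int_{t_0}^{t} a(s)\,ds\bigr)$; since $\psi(t_0) = \phi(\vec{X}_0) = 0$, this forces $\psi(t) \equiv 0$ throughout the interval of existence of the trajectory inside $U$. Equivalently one may invoke uniqueness in the Cauchy--Lipschitz theorem: the identically zero function solves $\dot{\psi} = a(t)\psi$ with initial value $\psi(t_0) = 0$, and it is the only solution. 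Either route shows that a trajectory issued from a point of $\{\phi = 0\}$ remains on $\{\phi = 0\}$ for as long as it stays in $U$, which is precisely the (local) invariance asserted.

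Finally I would make explicit what ``invariant'' means here: because $U$ is merely an open set and (\ref{eq42}) is assumed only on $U$, the conclusion is \emph{local} invariance --- a trajectory that meets $\{\phi = 0\}$ cannot leave it before leaving $U$ --- and the statement should be read that way rather than as global invariance. I do not anticipate a genuine obstacle; the only two points demanding a little care are (i) checking that $a(t) = k(\vec{X}(t))$ is continuous, so that the scalar linear problem genuinely has a unique solution on the trajectory's interval of existence, and (ii) not overstating the conclusion to global invariance. When this theorem is subsequently applied to the specific function $\phi$ produced by Proposition 4, the remaining work reduces to exhibiting the (polynomial) cofactor $k(\vec{X})$ with $\dot{\phi} = k\,\phi$, a direct but lengthy computation best carried out in the worked examples.
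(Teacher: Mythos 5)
Your argument is correct: restricting $\phi$ to a trajectory, observing that $\psi(t)=\phi(\vec X(t))$ solves the scalar linear equation $\dot\psi=k(\vec X(t))\,\psi$ with continuous coefficient, and concluding $\psi\equiv 0$ from $\psi(t_0)=0$ is the standard and complete proof of the Darboux criterion, and you are right to insist that the conclusion is only \emph{local} invariance (a trajectory meeting $\{\phi=0\}$ cannot leave it before leaving $U$). Be aware, however, that the proof the paper attaches to this proposition is not a proof of the criterion itself: the implication ``existence of a cofactor $\Rightarrow$ invariance'' is taken for granted there (with references to Darboux, Schlomiuk and Llibre--Medrado), and what is actually proved is that the specific function $\phi=\det ( \dot{\vec X},\ddot{\vec X},\ldots,\vec X^{(n)} )$ of Proposition 4 satisfies the hypothesis. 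Concretely, the paper uses $\ddot{\vec X}=J\dot{\vec X}$ and, under the additional assumption that the Jacobian is locally stationary ($dJ/dt=0$), $\vec X^{(n+1)}=J\,\vec X^{(n)}$, together with the trace identity (\ref{eq44}), to obtain $L_{\vec V}\phi=\mathrm{Tr}(J)\,\phi$, i.e.\ the explicit cofactor $k(\vec X)=\mathrm{Tr}(J)$. So the two arguments are complementary rather than alternative: yours supplies the general principle that the proposition literally asserts, while the paper's supplies exactly the ``direct but lengthy computation'' that you defer to the worked examples --- and does so in general dimension $n$, at the price of the extra hypothesis $dJ/dt=0$. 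Read together, they yield the full invariance claim for the flow-curvature manifold.
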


\begin{proof} To state the invariance of \textit{slow manifold} (\ref{eq42}), it is first necessary to recall the following identities (\ref{eq43})-(\ref{eq44}) the proof of which is obvious:

\begin{equation}
\label{eq43}
\ddot {\vec {X}} = J \dot {\vec {X}},
\end{equation}

where $J$ is the Jacobian matrix associated with any $n$-dimensional \textit{slow-fast dynamical system} (\ref{eq23}).

\begin{equation}
\label{eq44}
J\vec {a}_1 .\left( {\vec {a}_2 \wedge \ldots \wedge \vec {a}_n }
\right)+\vec {a}_1 .\left( {J\vec {a}_2 \wedge \ldots \wedge \vec {a}_n }
\right) +\ldots  +\vec {a}_1 .\left( {\vec {a}_2 \wedge \ldots \wedge J\vec
{a}_n } \right) = Tr\left( J \right)\vec {a}_1 .\left( {\vec {a}_2 \wedge
\ldots \wedge \vec {a}_n } \right).
\end{equation}

Thus, the Lie derivative of the \textit{slow manifold} (\ref{eq42}) reads:

\begin{equation}
\label{eq45}
L_{\vec V } \phi ( {\vec {X}} )=\dot {\vec {X}}\cdot ( {\ddot {\vec {X}} \wedge \dddot{\vec {X}}\wedge \ldots \wedge \mathop {\vec {X}}\limits^{\left( {n+1} \right)} } ),
\end{equation}

From the identity (\ref{eq43}) we find that:

\begin{equation}
\label{eq46}
\mathop {\vec {X}}\limits^{\left( {n+1} \right)} = J^n\dot {\vec {X}} \mbox{\quad if \quad} \frac{dJ}{dt}=0,
\end{equation}

where $J^n$ represents the $n^{th}$ power of $J$. Then, it follows that:

\begin{equation}
\label{eq47}
\mathop {\vec {X}}\limits^{\left( {n+1} \right)} = J\mathop {\vec {X}}\limits^{\left( n \right)}.
\end{equation}

Replacing $\mathop {\vec {X}}\limits^{\left( {n+1} \right)}$ in Eq. (\ref{eq45}) with Eq. (\ref{eq47}) we have:

\begin{equation}
\label{eq48}
L_{\vec V } \phi ( {\vec {X}} ) = \dot {\vec {X}}\cdot ( {\ddot {\vec {X}}\wedge \dddot{\vec {X}}\wedge \ldots \wedge J\mathop {\vec{X}}\limits^{\left( n \right)} } ).
\end{equation}

The right hand side of this Eq. (\ref{eq48}) can be written:

\[
J \dot {\vec {X}}\cdot ( {\ddot {\vec {X}}\wedge \dddot{\vec {X}}\wedge \ldots \wedge \mathop {\vec{X}}\limits^{\left( n \right)} } ) + \dot {\vec {X}}\cdot ( J {\ddot {\vec {X}}\wedge \dddot{\vec {X}}\wedge \ldots \wedge \mathop {\vec{X}}\limits^{\left( n \right)} } ) + \ldots + \dot {\vec {X}}\cdot ( {\ddot {\vec {X}}\wedge \dddot{\vec {X}}\wedge \ldots \wedge J\mathop {\vec{X}}\limits^{\left( n \right)} } ).
\]

According to Eq. (\ref{eq47}) all terms are null except the last one. So, by taking into account identity (\ref{eq44}), we find:

\[
L_{\vec V } \phi ( {\vec {X}} ) = Tr\left( J \right) \dot {\vec {X}}\cdot ( {\ddot {\vec {X}}\wedge \dddot{\vec {X}}\wedge \ldots
\wedge \mathop {\vec {X}}\limits^{\left( n \right)} } ) = Tr\left( J \right)\phi ( {\vec {X}} ) = k( {\vec {X}} ) \phi ( {\vec {X}} ),
\]

where $k( {\vec {X}} ) = Tr\left( J \right)$ represents the trace of the Jacobian matrix.\\

So, the invariance of the \textit{slow manifold} analytical equation of any $n$-dimensional \textit{slow-fast dynamical system} (\ref{eq23}) is established provided that the Jacobian matrix is locally stationary. \end{proof}

More than ten years ago, Ginoux \cite{Gin} established for any $n$-dimensional \textit{singularly perturbed dynamical systems} (\ref{eq1}) the identity between Fenichel's invariance (\ref{eq5}) and Darboux invariance (\ref{eq42}) (see also Ginoux \cite{GinLi3}).\\

Thus, it follows from what precedes that both \textit{Tangent Linear System Approximation Method} and \textit{Flow Curvature Method} are absolutely identical.

\subsection{Inflection Line Method}

In 1994, Br{\o}ns and Bar-Eli \cite{Brons1994} presented their \textit{Inflection Line Method} for planar or two-dimensional \textit{slow-fast dynamical system} (\ref{eq23}) which can be written as follows:

\begin{equation}
\label{eq49}
\left\{ \begin{aligned}
\varepsilon \dot {x} & = f\left( x, y, \varepsilon \right), \hfill \\
 \dot {y} & = g\left( x,y, \varepsilon  \right). \hfill \\
\end{aligned}  \right.
\end{equation}

According to Br{\o}ns and Bar-Eli \cite{Brons1994}:

\begin{quote}
``The curvature $\kappa$ of a smooth curve measures the rate of turn of the tangent vector with respect to arc length $s$. The \textit{inflection line} $l_i$ associated with (\ref{eq49}) is the locus of points where $\kappa_t = 0$. The inflection line typically consists of one or more smooth curves. For our purposes, it suffices to consider systems (\ref{eq49}) in a region where $f(x, y, \varepsilon) \neq 0$. Then time can be eliminated to give the equivalent equation for the trajectories,

\begin{equation}
\label{eq50}
\dfrac{dy}{dx} = \dfrac{g\left( x, y, \varepsilon \right)}{f\left( x, y, \varepsilon \right)}.
\end{equation}

For curves in the ($x,y$)-plane of the form $y = y(x)$, the curvature can be computed as

\begin{equation}
\label{eq51}
\kappa = \dfrac{y''(x)}{\left(1 + y'(x) \right)^{3/2}},
\end{equation}

where the positive root in the denominator is used. Thus, the inflection line for (\ref{eq50}) is determined by $y''(x) = 0$, and an equation for it can be found by differentiating (\ref{eq50}).''

\end{quote}

\smallskip

For such two-dimensional dynamical systems (\ref{eq49}), the issue of the parametrization highlighted by Gluck \cite{Gluck} just above can be easily solved by dividing the two components of the velocity vector field (\ref{eq49}) as done by Br{\o}ns and Bar-Eli \cite{Brons1994}. Nevertheless, such simplification is no more possible in dimension higher than two. Let's notice that for three-dimensional \textit{dynamical systems} the use of \textit{curvature} instead of \textit{torsion} has already been carried out by Ginoux \textit{et al.} \cite{GiRo2a} and by Gilmore \textit{et al.} \cite{Gilmore}. However, the use of \textit{curvature} or more precisely, the \textit{first curvature} require to solve a system of two nonlinear equations since in dimension three a \textit{curve} is defined as the intersection of two surfaces. Nowadays, this kind of problem cannot be solved analytically in the general case. Thus, such considerations precludes any generalization of the \textit{Inflection Line Method} proposed by Br{\o}ns and Bar-Eli \cite{Brons1994} at least from an analytical point of view.

Now let's prove that the \textit{Inflection Line Method} is just a particular case of the more general \textit{Flow Curvature Method}. According to Br{\o}ns and Bar-Eli \cite{Brons1994} ``the inflection line for (\ref{eq50}) is determined by $y''(x) = 0$, and an equation for it can be found by differentiating (\ref{eq50}).'' So, let's differentiate equation (\ref{eq50}):

\begin{equation}
\label{eq52}
\dfrac{d}{dx}\left( \dfrac{dy}{dx} \right) = \dfrac{d}{dx}\left( \dfrac{dy}{dt} \dfrac{dt}{dx} \right) = \dfrac{d}{dt}\left( \dfrac{\dot{y}}{\dot{x}} \right) \dfrac{dt}{dx} = \dfrac{\ddot{y}\dot{x} - \ddot{x}\dot{y}}{\dot{x}^3} = \dfrac{1}{\dot{x}^3} \begin{vmatrix} \dot{x} & \ddot{x} \\ \dot{y} & \ddot{y} \end{vmatrix} = 0.
\end{equation}

So, the condition $y''(x) = 0$ given by Eq. (\ref{eq52}) is absolutely identical to the condition $\det ( \dot{\vec X}, \ddot{\vec X} ) = 0$ where $\vec X = (x, y)^t$ provided by (\ref{eq36}). As a consequence, it is thus proved that the \textit{Inflection Line Method} is just a particular case of the more general \textit{Flow Curvature Method}.

To resume, we have proved in this section that the \textit{Flow Curvature Method} encompasses the \textit{Intrinsic Low-Dimensional Manifold Method}, the \textit{Tangent Linear System Approximation Method} and the \textit{Inflection Line Method} which all belong to the second category called \textit{Curvature-Based Methods}.

Now, a last issue arises as follows: for $n$-dimensional \textit{singularly perturbed dynamical systems} (\ref{eq1}), does the \textit{Flow Curvature Method} provide the same approximation of their \textit{slow invariant manifold} as the \textit{Geometric Singular Perturbation Method}?

\section{Singular Perturbation and Curvature Based Methods}

While the \textit{slow invariant manifold} analytical equation (\ref{eq6}) given by the \textit{Geometric Singular Perturbation Theory} is an \textit{explicit} equation, the \textit{slow invariant manifold} analytical equation (\ref{eq41}) obtained according to the \textit{Flow Curvature Method} is an \textit{implicit} equation. So, in order to compare the latter with the former it is necessary to plug the following perturbation expansion: $\vec {Y}\left( {\vec {x},\varepsilon } \right) = \vec {Y}_0 \left( \vec {x} \right) + \varepsilon \vec {Y}_1 \left( \vec {x} \right) + O\left( {\varepsilon ^2} \right)$ into (\ref{eq41}). Thus, solving order by order for $\vec {Y}\left( {\vec {x},\varepsilon } \right)$ will transform (\ref{eq41}) into an \textit{explicit analytical equation} enabling the comparison with (\ref{eq6}). The Taylor series expansion for $\phi (\vec{X}, \varepsilon ) = \phi (\vec{x}, \vec {Y}\left( {\vec {x},\varepsilon } \right), \varepsilon )$ up to terms of order one in $\varepsilon $ reads:

\begin{equation}
\label{eq53}
\phi (\vec{X}, \varepsilon ) = \phi (\vec{x}, \vec {Y}_0
\left( \vec {x} \right),0 ) + \varepsilon D_{\vec{y}} \phi (\vec{x}, \vec {Y}_0 \left( \vec {x} \right),0 )\vec {Y}_1 \left( \vec {x} \right) + \varepsilon \frac{\partial \phi }{\partial \varepsilon } (\vec{x}, \vec {Y}_0 \left( \vec {x} \right),0 ).
\end{equation}

\textbullet \hspace{0.1in} At order $\varepsilon ^0$, Eq. (\ref{eq53}) gives:

\begin{equation}
\label{eq54}
\phi \left(\vec {x}, \vec {Y}_0 \left( \vec {x} \right),0 \right) = 0,
\end{equation}

which defines $\vec {Y}_0 \left( \vec {x} \right)$ due to the invertibility of $D_{\vec {y}} \phi $ and application of the \textit{Implicit Function Theorem}.\\

\textbullet \hspace{0.1in} The next order $\varepsilon ^1$, provides:

\begin{equation}
\label{eq55}
D_{\vec{y}} \phi (\vec{x}, \vec {Y}_0 \left( \vec {x} \right),0 )\vec {Y}_1 \left( \vec {x} \right) + \frac{\partial \phi }{\partial \varepsilon } (\vec{x}, \vec {Y}_0 \left( \vec {x} \right),0 ) = \vec {0},
\end{equation}

which yields $\vec {Y}_1 \left( \vec {x} \right)$ and so forth.\\

In order to prove that this equation is completely identical to Eq. (\ref{eq8}), let's rewrite it as follows:

\[
\vec {Y}_1 \left( \vec {x} \right) = - \left[ D_{\vec{y}} \phi (\vec{x}, \vec {Y}_0 \left( \vec {x} \right),0 ) \right]^{-1} \frac{\partial \phi }{\partial \varepsilon } (\vec{x}, \vec {Y}_0 \left( \vec {x} \right),0 ).
\]

By application of the \textit{chain rule}, i.e., the derivative of $\phi (\vec{x}, \vec {Y}_0 \left( \vec {x} \right),0)$ with respect to the variable $\vec {y}$ and then with respect to $\varepsilon$, it can be stated that:

\[
\vec {Y}_1 \left( \vec {x} \right)  = - \left[ (D_{\vec {x}} \vec {f}) (D_{\vec {y}}\vec {f}) \right]^{-1} (D_{\vec {y}}\vec {f}) \vec {g}(\vec{x}, \vec {Y}_0 ( \vec {x} ),0 ) - \left[ D_{\vec {y}}\vec {f} \right]^{-1} D_{\varepsilon}\vec {f}(\vec{x}, \vec {Y}_0 \left( \vec {x} \right),0 ).
\]

But, according to the \textit{Implicit Function Theorem} we have:

\[
(D_{\vec {x}} \vec {f}) = - (D_{\vec {y}} \vec {f}) (D_{\vec {x}} \vec {y}) = - (D_{\vec {y}} \vec {f}) (D_{\vec {x}} \vec {Y}_0 ( \vec {x} )).
\]

Then, by replacing into the previous equation we find:

\[
\vec {Y}_1 \left( \vec {x} \right)  = \left[ (D_{\vec {y}} \vec {f}) (D_{\vec {x}} \vec {Y}_0 ( \vec {x} )) (D_{\vec {y}}\vec {f}) \right]^{-1} (D_{\vec {y}}\vec {f}) \vec {g}(\vec{x}, \vec {Y}_0 ( \vec {x} ),0 ) - \left[ D_{\vec {y}}\vec {f} \right]^{-1} D_{\varepsilon}\vec {f}(\vec{x}, \vec {Y}_0 \left( \vec {x} \right),0 ).
\]

After simplifications, we have:

\[
\vec{Y_1}\left( \vec{x} \right) = \left[ D_{\vec {x}} \vec {Y}_0 \left( \vec {x} \right) D_{\vec {y}} \vec {f} (\vec{x}, {\vec {Y}_0 \left( \vec {x} \right),0} ) \right]^{-1} \vec {g}(\vec{x}, {\vec {Y}_0 \left( \vec {x} \right),0} ) - \left[ D_{\vec {y}} \vec {f} (\vec{x}, {\vec {Y}_0 \left( \vec {x} \right),0} ) \right]^{-1} \frac{\partial \vec {f}}{\partial \varepsilon }.
\]

Finally, Eq. (\ref{eq55}) may be written as:

\[
D_{\vec {x}} \vec {Y}_0 \left( \vec {x} \right) \left[ D_{\vec {y}} \vec {f} (\vec{x}, {\vec {Y}_0 \left( \vec {x} \right),0} ) \vec{Y_1}\left( \vec{x} \right) + \frac{\partial \vec {f}}{\partial \varepsilon } \right] = \vec {g}(\vec{x}, {\vec {Y}_0 \left( \vec {x} \right),0} ).
\]

Thus, identity between the \textit{slow invariant manifold} equation given by the \textit{Geometric Singular Perturbation Theory} and by the \textit{Flow Curvature Method} is proved up to first order term in $\varepsilon$.\\

Let's notice that the \textit{slow invariant manifold} equation (\ref{eq41}) associated with $n$--dimensional \textit{singularly perturbed systems} defined by the \textit{Flow Curvature Method} is a tensor of order $n$. As a consequence, it can only provide an approximation of $n$-order in $\varepsilon$ of the \textit{slow invariant manifold} equation (\ref{eq6}). In order to emphasize this result let's come back to the example of the Van der Pol system \cite{Vdp1926}.

\subsection{Van der Pol system}

Let's consider the Van der Pol system (\ref{eq9}) which is a well-known two-dimensional \textit{singularly perturbed dynamical system}. By application of the \textit{Geometric Singular Perturbation Method} an approximation up to order $\e^3$ of the \textit{slow invariant manifold} of the Van der Pol \textit{singularly perturbed dynamical system} (\ref{eq9}) is given by Eq. (\ref{eq13}):

\begin{equation}
\label{eq56}
y = Y(x, \e) = \dfrac{x^3}{3} - x + \e \dfrac{x}{1-x^2}  + \e^2 \dfrac{x (1 + x^2)}{(1 - x^2)^4} + O\left( {\varepsilon^3} \right).
\end{equation}

By using the \textit{Flow Curvature Method} the \textit{slow manifold} equation associated with this system reads:

\begin{equation}
\label{eq57}
\phi ( {\vec {X}} )=\phi \left( {x,y,\varepsilon } \right)=9y^2+\left( {9x+3x^3} \right)y+6x^4-2x^6+9x^2\varepsilon = 0.
\end{equation}

So, by plugging the perturbation expansion (\ref{eq56}) into (\ref{eq57}) and then, solving order by order leads to:

\begin{equation}
\label{eq58}
y=\frac{x^3}{3}-x+\varepsilon \frac{x}{1-x^2} + \varepsilon ^2\frac{x}{\left({1 - x^2} \right)^3}+O\left( {\varepsilon ^3} \right).
\end{equation}

\begin{figure}[htbp]
  \begin{center}
    \begin{tabular}{ccc}
      \includegraphics[width=8cm,height=8cm]{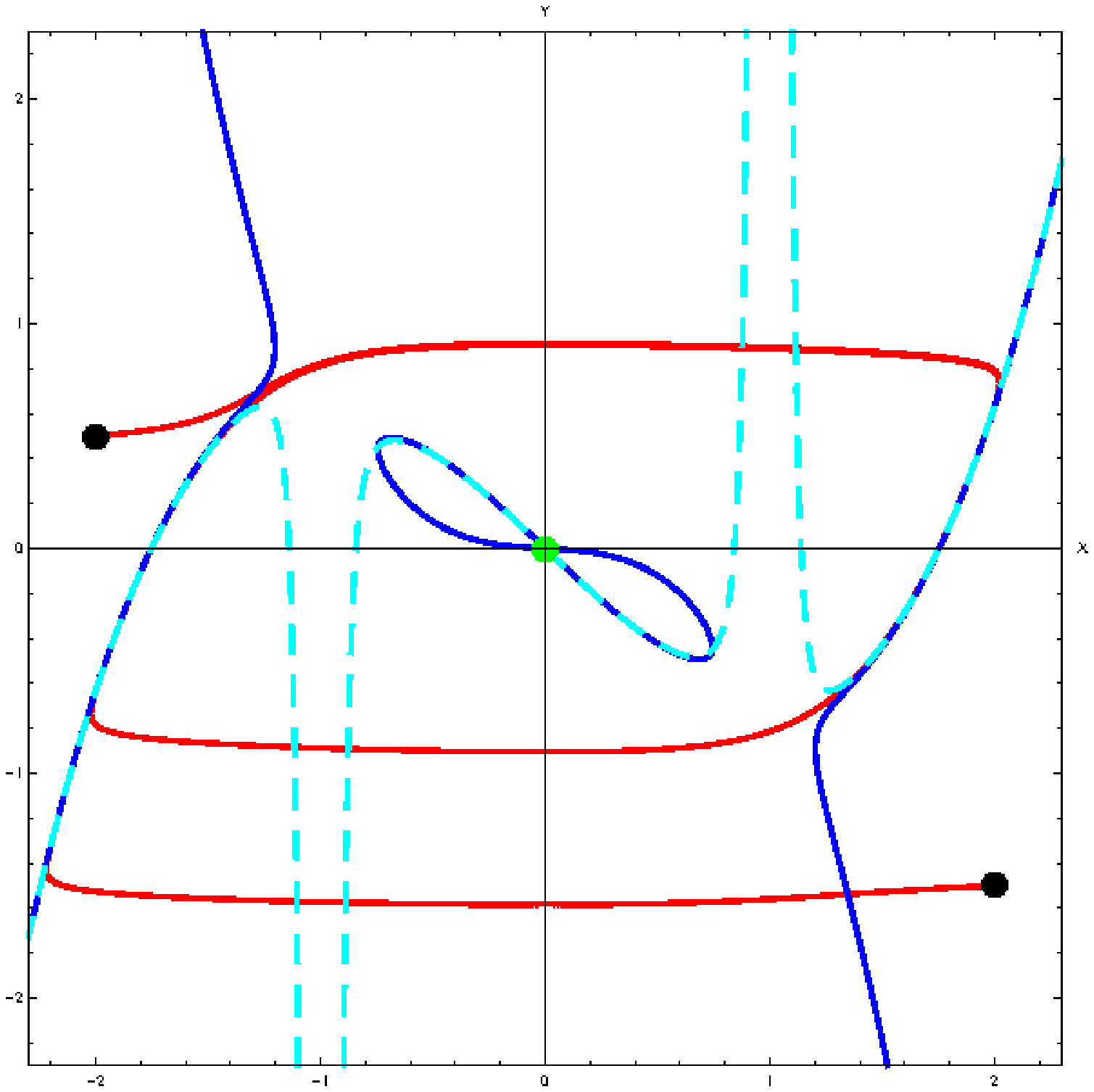} & ~~~~~~ &
      \includegraphics[width=8cm,height=8cm]{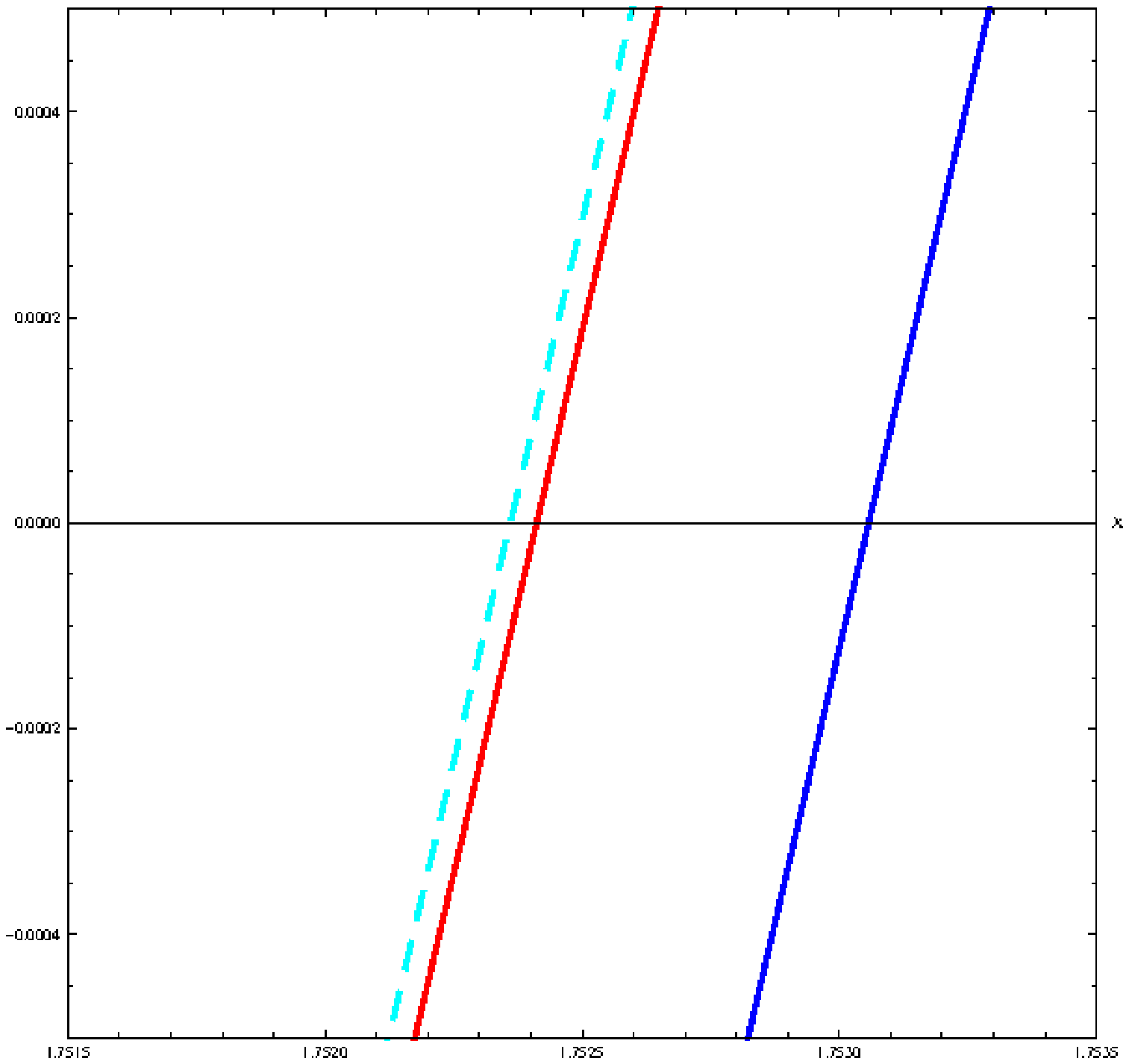} \\
      (a) & & (b) \\[0.2cm]
      \includegraphics[width=8cm,height=8cm]{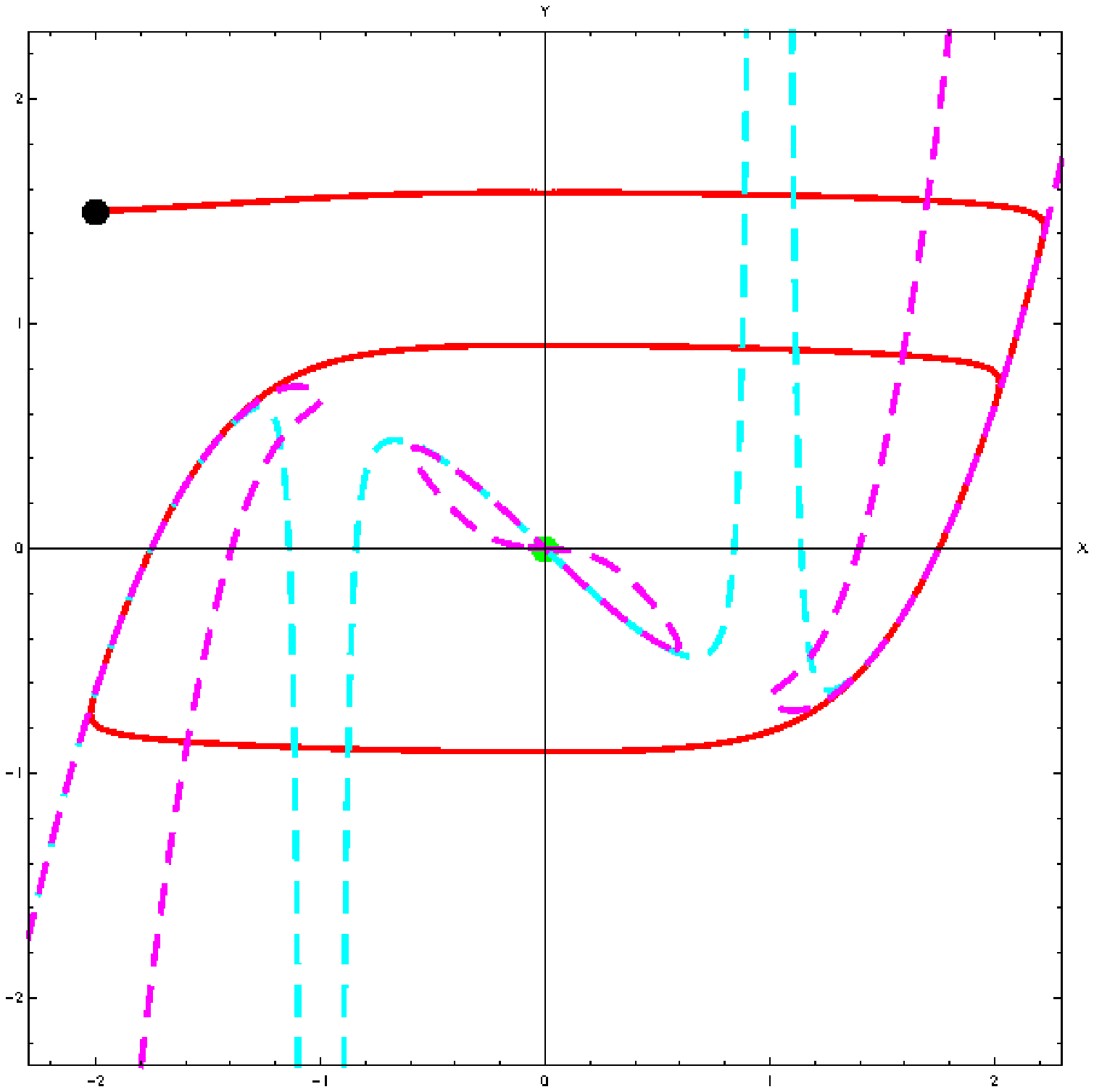} & ~~~ &
      \includegraphics[width=8cm,height=8cm]{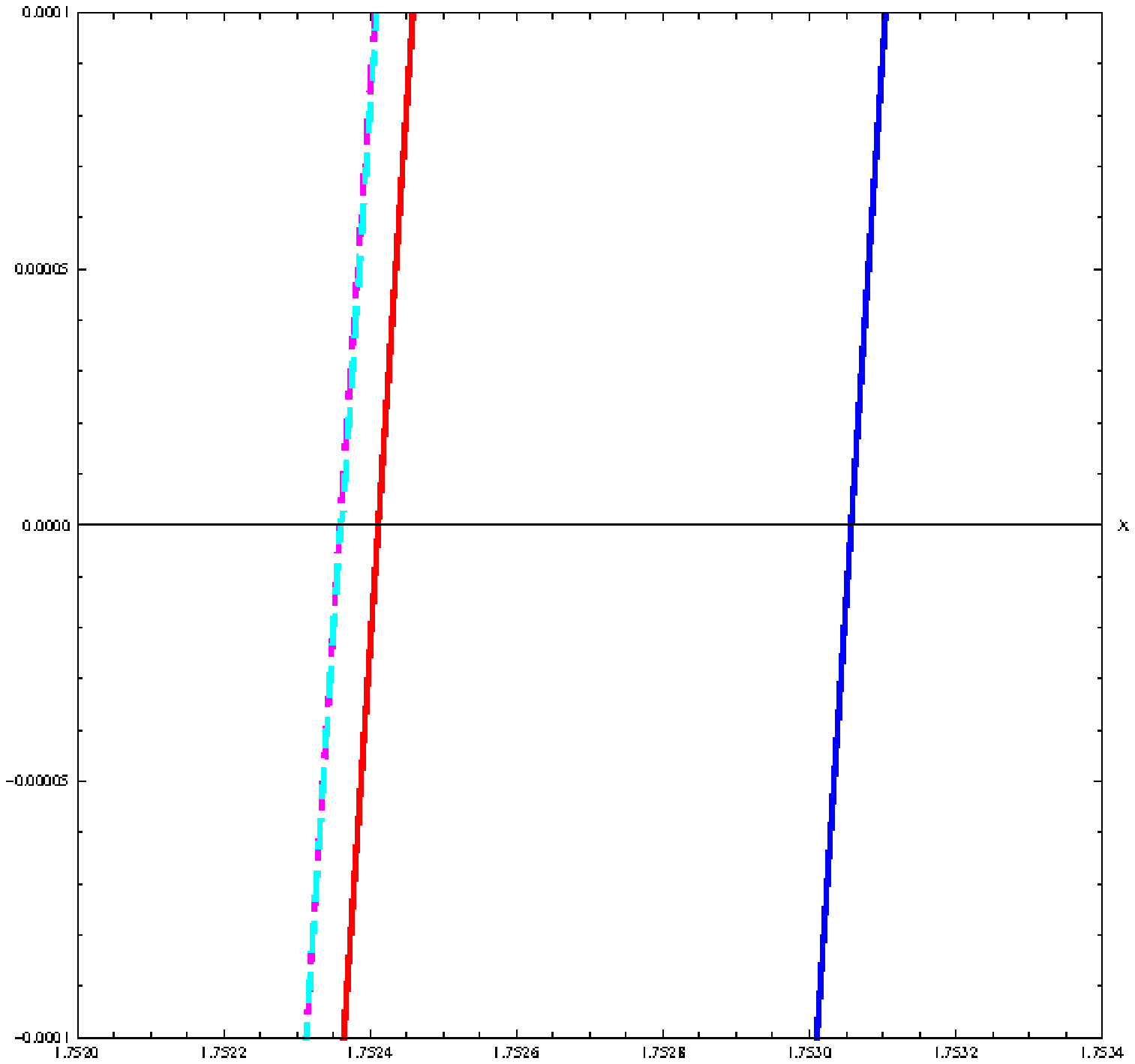} \\
      (c) & & (d) \\[-0.2cm]
    \end{tabular}
    \caption{Slow manifold of Van der Pol system (\ref{eq9}).}
    \label{fig1}
  \end{center}
  \vspace{-0.5cm}
\end{figure}

\newpage

So, both \textit{slow invariant manifolds} approximated by Eqs. (\ref{eq56}) \& (\ref{eq58}) are completely identical up to order one in $\varepsilon $. At order $\varepsilon^2$ a difference appears which is due to the fact that the \textit{slow invariant manifold} obtained with the \textit{Flow Curvature Method} is defined, for a two-dimensional dynamical system, by the \textit{second order tensor} of \textit{curvature}, i.e. by a determinant (\ref{eq36}) involving the first and second time derivatives of $\vec X$. If one makes the same computation as previously but with the Lie derivative of this determinant we obtain a determinant containing the first and third time derivatives of $\vec X$, i.e. the \textit{third order tensor} $\det ( \dot{\vec X}, \dddot{\vec X} ) = 0$. Then there is no more difference between order two in $\varepsilon$ and \textit{slow invariant manifold} given by both methods are exactly the same. These results, highlighted in figures 1 and more particularly in Fig. 1d, had been already found by Bruno Rossetto \cite{Rossetto1986, Rossetto1987} by using his \textit{Successive Approximations Method} and then, by Gear \textit{et al.} \cite{Gear} and Zagaris \textit{et al.} \cite{Zagaris} with their \textit{Zero-Derivative Principle}. In figures 1, we have plotted in blue the \textit{slow invariant manifold} approximated with the \textit{Flow Curvature Method} (see Fig. 1a) and in cyan dotted line the \textit{slow invariant manifold} approximated with the \textit{Geometric Singular Perturbation Method} (see Fig. 1a \& 1c). Both \textit{slow invariant manifolds} are compared in figure 1b in which we emphasize that \textit{Flow Curvature Method} only provides an approximation up to order one in $\varepsilon $. In figures 1c \& 1d, we have plotted the time derivative of the \textit{slow invariant manifold} obtained with the \textit{Flow Curvature Method}, i.e., given by the \textit{third order tensor} $\det ( \dot{\vec X}, \dddot{\vec X} ) = 0$ in magenta dotted  line. Then, as observed on Fig. 1d, there is no difference between both \textit{slow invariant manifolds}. According to Beno\^{i}t \textit{et al.} \cite{Benoit2015}:

\begin{quote}
``However, the global solution sets are very complicated and have branches which do not approximate any slow manifold. These branches are the ghosts. Close to
the fold points of the critical manifolds the branches which approximate a Fenichel manifold merge with a ghost in a fold. Furthermore, we see that the solution to $G = 0$ [$\phi ( {\vec {X}} )= 0$, according to the \textit{Flow Curvature Method}] has a double point at the equilibrium where a ghost intersects the approximation to the Fenichel manifold transversally.''
\end{quote}

As highlighted in Fig. 1a \& 1c, the \textit{slow invariant manifold} approximated with the \textit{Flow Curvature Method} obviously contains some ``ghost parts'' near the \textit{fold}. Nevertheless, let's notice that such ``ghost parts'' do also exist for the \textit{Geometric Singular Perturbation Theory} as emphasized in Fig. 1a \& 1c by the dark points representing various initial conditions. Moreover, the ``double point at the equilibrium where a ghost intersects the approximation to the Fenichel manifold transversally'', i.e. the curve with the shape of a lemniscate of Bernoulli, provides in fact the eigendirections of both \textit{slow} and \textit{fast} eigenvectors of Van der Pol system at the equilibrium point, i.e. the origin.

\subsection{Lorenz system}

In the beginning of the sixties a young meteorologist of the M.I.T., Edward N. Lorenz, was working on weather prediction. He elaborated a
model derived from the Navier-Stokes equations with the Boussinesq approximation, which described the atmospheric convection. Although his
model lost the correspondence to the actual atmosphere in the process of approximation, chaos appeared from the equation describing the dynamics
of the nature. Let's consider the Lorenz model \cite{Lorenz}:

\begin{equation}
\label{eq59}
\left\{
\begin{aligned}
 \dot {x} & = \sigma \left( y - x \right), \hfill \\
 \dot {y} & = -x z + r x -  y, \hfill \\
 \dot {z} & = x y - b z, \hfill \\
\end{aligned}
\right.
\end{equation}

with the following parameters set $\sigma = 10$, $b = 8/3$ and by setting $r = 28$ this \textit{dynamical system} exhibits the famous Lorenz butterfly. Although the Lorenz's system (\ref{eq59}) has no \textit{singular approximation} or no \textit{critical manifold}, it has been numerically stated by Rossetto \textit{et al.} \cite{Rossetto1998} that its Jacobian matrix possesses at least a large and negative real eigenvalue in a large domain of the phase space. So, it is considered as a \textit{slow fast dynamical system} but not as a \textit{singularly perturbed dynamical system}. Thus, none of the \textit{singular perturbation-based methods} can provide the \textit{slow invariant manifold} associated with Lorenz system. However, it can be obtained with any \textit{curvature-based methods} (see Rossetto \textit{et al.} \cite{Rossetto1998}) and more particularly with the \textit{Flow Curvature Method} (see Ginoux \textit{et al.} \cite{GiRo1} and Ginoux \cite{Gin}) which gives the following equation:

\begin{align}
\label{eq60}
&\phi(x,y,z) = -12954816 x^4 + 21168 x^6 + 13331304 x^3 y - 2772 x^5 y \hfill \notag \\
&- 15012 x^4 y^2 + 27 x^6 y^2 - 6554520 x y^3 + 13410 x^3 y^3 + 340200 y^4 \hfill \notag \\
& + 15120 x^2 y^4 - 8100 x y^5 +15906240 x^2 z + 1311744 x^4 z - 1512 x^6 z \hfill \notag \\
& + 5112720 x y z -1454430 x^3 y z- 45 x^5 y z - 5680800 y^2 z - 720 x^2 z^4 \hfill \notag \\
& - 790440 x^2 y^2 z + 540 x^4 y^2 z + 456180 x y^3 z - 810 x^3 y^3 z + 1800 y^4 z\hfill \notag \\
& -1686544 x^2 z^2 - 45252 x^4 z^2 + 27 x^6 z^2 - 317920 x y z^2 +58320 x^3 y z^2\hfill \notag \\
& + 372800 y^2 z^2 + 15750 x^2 y^2 z^2 -8100 x y^3 z^2 + 59040 x^2 z^3 + 540 x^4 z^3\hfill \notag \\
&+ 9297666 x^2 y^2 + 6480 x y z^3 -810 x^3 y z^3 - 7200 y^2 z^3 = 0. \hfill
\end{align}

In fact, it can been numerically stated that in the vicinity of this \textit{slow manifold}, the Jacobian matrix is quasi-stationary and so $dJ/dt = 0$. So, the invariance of this \textit{slow manifold} (\ref{eq60}) is stated according to Proposition 5. Let's notice that such \textit{slow invariant manifold} exhibits the symmetry of Lorenz's system (\ref{eq59}), i.e.  $\phi(-x,-y, z) = \phi(x,y,z)$.

\begin{figure}[htbp]
\centerline{\includegraphics[width=16cm,height=16cm]{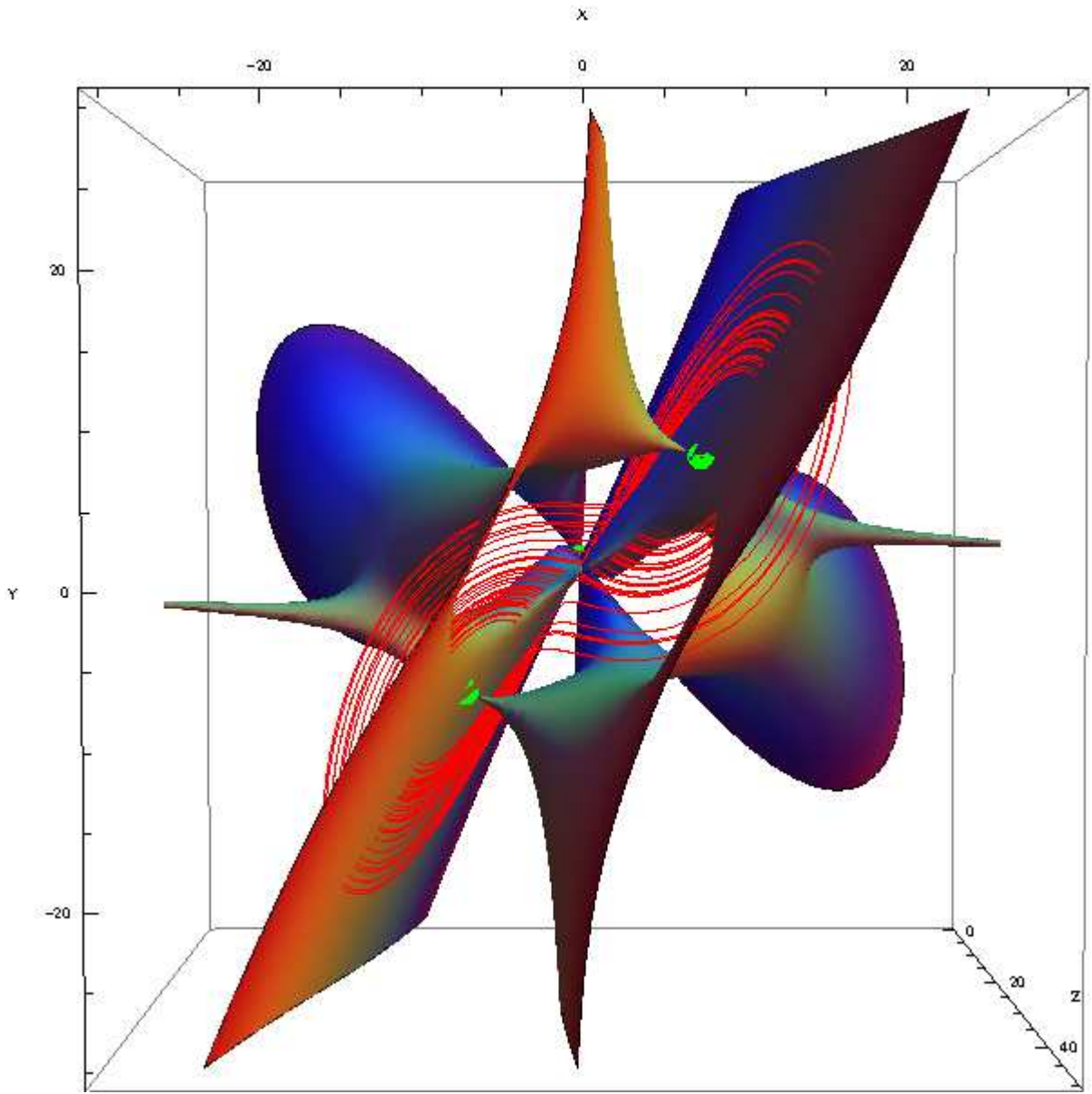}}
\caption{Slow invariant manifold of the Lorenz system \eqref{eq59}.}
\label{fig2}
\end{figure}

\newpage

On figure 2, we observe some kind of ``horns'' near the two fixed points. Such ``horns'' are the three-dimensional analog to the ``ghost parts'' in dimension two. As previously, they provide the eigendirection of both \textit{slow} and \textit{fast} eigenvectors of Lorenz system at the fixed points. For the origin, their shapes correspond to the nature of this equilibrium point, i.e., a \textit{saddle-node}. For the two others, these ``horns'' represent the \textit{cone of attraction} of the eigendirection of the negative real eigenvalue.

Let's notice that Lorenz system is paradigmatic since it also implies chaos in lasers due to the fact that a class C laser is equivalent to the Lorenz model. So, the determination of the \textit{slow invariant manifold} equation of such system is of great importance. Beyond this, class B lasers, ruled only by two equations, are \textit{slow-fast dynamical systems} and the addition of a third equation or a time modulation on a system parameter leads to chaos \cite{Meucci2021}.

\newpage

\section{Discussion}

During the last century, the analysis of a great number of phenomena modeled with dynamical systems, i.e., with sets of nonlinear ordinary differential equations, highlighted the existence of at least two time scales for their evolution: a \textit{slow} time and a \textit{fast} time. This was transcribed by the presence of at least a small multiplicative parameter $\varepsilon$ in the velocity vector field of these dynamical systems. They were thus called \textit{singularly perturbed dynamical systems} and it was proved that they possess \textit{slow invariant manifolds}. Then, various methods were developed to compute such \textit{slow invariant manifolds} or, at least an asymptotic expansion in power of $\e$. Thus, determination of the \textit{slow invariant manifold} equation turned into a regular perturbation problem and the seminal works of Wasow \cite{Wasow}, Cole \cite{Cole}, O'Malley \cite{Malley1, Malley2} and Fenichel \cite{Fen5, Fen6, Fen7, Fen8} gave rise in the 1960s-1970s to the so-called \textit{Geometric Singular Perturbation Theory}. At the end of the 1980s, Rossetto \cite{Rossetto1986, Rossetto1987} developed the \textit{Successive Approximations Method} to approximate the \textit{slow invariant manifold} of \textit{singularly perturbed dynamical systems}. In 2005, Gear \textit{et al.} \cite{Gear} and then, Zagaris \textit{et al.} \cite{Zagaris} used the \textit{Zero-Derivative Principle} for the same purpose. We have established in this work that \textit{Geometric Singular Perturbation Theory}, \textit{Successive Approximations Method} and \textit{Zero-Derivative Principle} are absolutely identical, i.e., provide exactly the same approximation of \textit{slow invariant manifold} equation and so belong to the first category we have called: \textit{Singular Perturbation-Based Methods}. However, according to O'Malley \cite{Malley1}, Rossetto \cite{Rossetto1987} and Beno\^{i}t \textit{et al.} \cite{Benoit2015}, the main drawback of these methods is that the validity of the asymptotic expansion in power of $\e$, approximating the \textit{slow invariant manifold} equation, is expected to breakdown near the \textit{fold} or, near non-hyperbolic regions.\\

Beside, \textit{singularly perturbed dynamical systems} there are \textit{dynamical systems} without an explicit timescale splitting. Some of these systems, which have been called \textit{slow-fast dynamical systems}, have the following property: their Jacobian matrix has at least a \textit{fast eigenvalue}, i.e. with the largest absolute value of the real part. For such systems, \textit{Singular Perturbation-Based Methods} can not be applied anymore. Thus, in the beginning of the 1990s various approaches have been proposed in order to approximate slow manifolds of such \textit{slow-fast dynamical systems}. In 1992, Maas and Pope introduced the \textit{Intrinsic Low-Dimensional Manifold Method} \cite{Maas} and two year later, Br{\o}ns and Bar-Eli \cite{Brons1994}, their \textit{Inflection Line Method} only applicable to two-dimensional \textit{slow fast dynamical systems}. In 1998, Rossetto \textit{et al.} \cite{Rossetto1998} proposed the \textit{Tangent Linear System Approximation} and then, the \textit{Flow Curvature Method} was developed by Ginoux \textit{et al.} \cite{GiRo1, GiRo2} and Ginoux \cite{Gin}. We have proved in this work that the \textit{Flow Curvature Method} encompasses the \textit{Intrinsic Low-Dimensional Manifold Method}, the \textit{Tangent Linear System Approximation Method} and the \textit{Inflection Line Method} which all belong to the second category we have called \textit{Curvature-Based Methods}. At last, we have also established the identity between the \textit{slow invariant manifold} equation given by the \textit{Geometric Singular Perturbation Method} and by the \textit{Flow Curvature Method} up to suitable order in $\varepsilon$. Then, one of the main criticism made against the \textit{Flow curvature Method}, i.e., the existence and significance of ``ghost parts'' in the \textit{slow invariant manifold} has been completely clarified and moreover, it has been proved that such ``ghost parts'' do also exist for the \textit{Geometric Singular Perturbation Theory}. The existence of such ``ghost parts'' is due to the loss of normal hyperbolicity near the \textit{fold} or, near non-hyperbolic regions.\\

According to Heiter \textit{et al.} \cite{Heiter}, the \textit{Flow Curvature Method} ``uses extrinsic curvature of curves in hyperplanes (codimension-1 manifolds) in order to derive a determinant criterion for computing slow manifold points.'' Last year, Poppe \textit{et al.} \cite{Poppe} generalized the \textit{Flow Curvature Method} based upon \textit{intrinsic curvature} and reformulated it in the framework of \textit{Riemannian Geometry}. With these authors, we ``share the opinion that the field of differential geometry is an appropriate frame to gain further insight in order to adequately define SIMs as slow attracting phase space structures exploited for model reduction purposes.''

\section*{Acknowledgments}
Author would like to thank Pr. Alain Goriely who convinced him of the necessity to write this paper and his wife Dr. Roomila Naeck who contributed to state the proof presented in Sec. 4.

\end{document}